\documentclass[]{article}
\usepackage{etex}
\usepackage{amssymb}
\usepackage{amsmath}
\usepackage{amsthm}
\usepackage[all]{xy}
\usepackage{hyperref}
\usepackage{bbm}
\usepackage{bm}
\usepackage{float}
\usepackage[font={footnotesize}]{caption}
\usepackage[round]{natbib}
\usepackage{mathtools}
\usepackage{setspace}
\usepackage{comment}
\usepackage{tikz}
\usepackage{pgf}
\usetikzlibrary{arrows,automata}
\usetikzlibrary{positioning}
\usetikzlibrary{shadows,shapes.arrows,decorations}
\usepackage{pgfplots}
\pgfplotsset{compat=newest}
\usepackage[all]{xy}
\usepackage[]{fullpage}
\usepackage{todonotes}
\providecommand{\keywords}[1]{\textit{Keywords:} #1}
\usepackage{tabularx}  
\usepackage{ragged2e}  
\newcolumntype{Y}{>{\RaggedRight\arraybackslash}X} 
\usepackage{booktabs}
\usepackage{authblk}

\newtheorem{theorem}{Theorem}

\newtheorem{definition}{Definition}
\newtheorem{example}{Example}

\newtheorem{lemma}{Lemma}

\newtheorem{property}{Property}
\numberwithin{definition}{section}
\numberwithin{theorem}{section}
\numberwithin{example}{section}
\numberwithin{lemma}{section}
\numberwithin{corollary}{section}
\numberwithin{proposition}{section}
\numberwithin{equation}{section}

\newcommand{\dr}{\textnormal{d}}
\newcommand\independent{\protect\mathpalette{\protect\independenT}{\perp}} \def\independenT#1#2{\mathrel{\rlap{$#1#2$}\mkern2mu{#1#2}}}

\newcommand{\T}{\textnormal{T}}
\newcommand{\E}{\mathbb{E}}
\newcommand{\V}{\mathbb{V}}
\newcommand{\Gr}{\mathcal{G}}

\usepackage{hyperref}

\begin{document}
\title{Coherent Frameworks for Statistical Inference serving Integrating Decision Support Systems}
\author[$\dag$]{Jim Q. Smith\footnote{\textit{Address for correspondence}: Jim Q. Smith, Department of Statistics, The University of Warwick, CV4 7AL Coventry, UK.
E-mail: J.Q.Smith@warwick.ac.uk}}
\author[$\dag$]{Martine J. Barons}
\author[$\dag$ $\dag$]{Manuele Leonelli}
\affil[$\dag$]{Department of Statistics, The University of Warwick}
\affil[$\dag$ $\dag$]{Instituto de Matem\'{a}tica, Universidade Federal do Rio de Janeiro}
\date{\vspace{-5ex}}

\maketitle
\begin{abstract}
A subjective expected utility policy making centre managing complex, dynamic systems needs to draw on the expertise of a variety of disparate panels of experts and integrate this information coherently. To achieve this, diverse supporting probabilistic models need to be networked together, the output of one model providing the input to the next. In this paper we provide a technology for designing an integrating decision support system and to enable the centre to explore and compare the efficacy of different candidate policies. We develop a formal statistical methodology to underpin this tool. In particular, we derive sufficient conditions that ensure inference remains coherent before and after relevant evidence is accommodated into the system. The methodology is illustrated throughout using examples drawn from two  decision support systems: one designed for nuclear emergency crisis management and the other to support policy makers in addressing the complex challenges of food poverty in the UK.

\vspace{0.35cm}
\noindent\keywords{Bayesian multi-agent models, causality, coherence, decision support, graphical models, likelihood separation.}
\end{abstract}

\section{Introduction}
Using a probability model for decision support for a single user has many advantages: as well as ensuring coherence, and hence transparency, recent computational advances have enabled such support to be fast, even when large amounts of structured information needs to be accommodated. However, the 21st century has seen the advent of massive models which need to be networked together to provide appropriate decision support in increasingly complex scenarios (see e.g. Figure \ref{networkino}). Each component of such a network is itself often informed by huge data sets. In these contexts,  users are typically decision \emph{centres} where both  users and experts are teams rather than individuals. Such centres often need a tool that can draw together inferences in this plural environment and integrate together expert judgements coming from a number of different panels of experts where each panel is supported by their own, sometimes very complex, models.  

\begin{figure*}
\begin{center}
\begin{tikzpicture}[->,>=stealth',shorten >=1pt,auto,node distance=1.5cm,
                    semithick]
 \tikzstyle{every state}=[fill=white,draw,text=black, shape=rectangle, rounded corners,every text node part/.style={align=center}]
  \node[state]  (A)      [fill=black!20, draw=black!100, text=black]              {Power \\ plant};
  \node[state]         (B) [right = of A, draw=green!100, fill=green!20] {Water \\ dispersal};
   \node[state] (C) [draw=blue!100,  fill=blue!20,  right = of B]{Human \\ absorption};
\node[state] (D) [below of=A, fill=black!20, draw=black!100, text=black ]{Source \\  term};
\node[state](E)[below of=D, draw=green!100, fill=green!20]{Air \\ dispersal};
\node[state](F)[below of=B, draw=green!100, fill=green!20]{Deposition};
\node[state](H)[below of=F, draw=brown!100, fill=brown!20]{Political \\ effects};
\node[state] (L)[draw=blue!100,  fill=blue!20, text=black, below of =  C]{Animal \\ absorption};
\node[state](I)[below of=L, fill=yellow!20, draw=yellow!100]{Costs};
\node[state](G)[draw=red!100, fill=red!20, right=1cm of L]{Human \\  health};
\path (A) edge (D)
        (D) edge (B)
             edge (E)
         (B) edge (C)
              edge (L)
             edge (F)
          (E) edge (H)
               edge (F)
           (L) edge (C)
         (C) edge (G)
         (F) edge (L)
             edge (I)
             edge (C)
          (L) edge (I)
         (G) edge (I)
          (I) edge (H);
\end{tikzpicture}
\end{center}
\caption{A plausible network of models for a decision support system in nuclear emergency management (from \cite{Leonelli2013d}), where nodes equally colored are associated to the same domain of expertise. \label{networkino}}
\end{figure*}
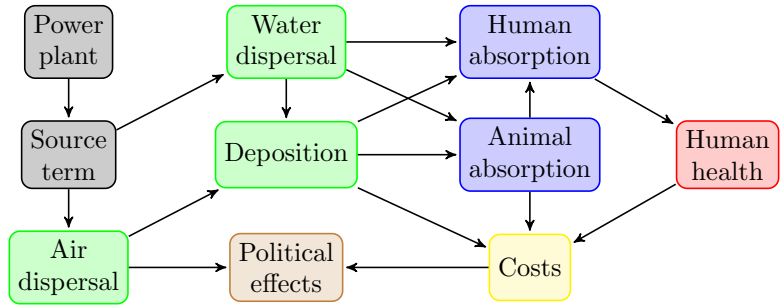 

Although, increasingly, many of these component expert panels are supported by \emph{probabilistic} models, it is natural but usually inappropriate to commission a single comprehensive probabilistic model over the whole composite, except in the case of relatively small systems. Such an overarching probability model would be huge and, perhaps more critically, unless there existed shared structural assumptions, no single centre could realistically `own' all the statements about the full joint distribution of the hundreds of thousands of diverse random variables in the aggregate system. Furthermore, and from a more practical perspective, even if it \emph{were} possible to build such a system, typically, in the types of domain we address in this paper, the different component systems are being constantly revised by the relevant panels to accommodate new understanding, science and data.  Any overarching probabilistic model would therefore  quickly become obsolete: the judgements it embodies would no longer reflect current understanding.

In this paper we argue that what is often needed instead of a single, overarching probability model is an \emph{integrating decision support systems} (IDSS). This would process only carefully selected probabilistic outputs from each contributing component - those statistics on which expected utilities would need to depend. It would then combine these dynamically evolving expert judgements together appropriately to provide the basis for a coherent assessment. In particular, benchmark numerical efficacy scores for each candidate policy would then be calculated from the individual components' outputs and thus help determine the efficacy of each of the different options considered by the decision centre. 

In fact, theory and methods for breaking up probability models into autonomous subcomponents are already well developed by agent-based modellers \citep[see e.g.][]{Xiang2002}, albeit when addressing automated decision-making rather than the anthropomorphic decision-making we address here.  Perhaps even more relevant to this development is the seminal paper by \citet{Mahoney1996} who apply general engineering principles to develop protocols for the coherent integration of evidence over a diverse panel of experts informing a problem. Although they quickly focus their ideas down on to the specific Bayesian network (BN) model class, their framework is nevertheless a valuable one and is currently being exploited within the context of object-oriented BNs (OOBNs) by a number of authors \citep[e.g.][]{Johnson2014, Johnson2012}. 

For the purposes of this paper we return to the more general setting described initially by \citet{Mahoney1996}. Here we apply analogous principles to even larger problems that they envisaged, developing a sound statistical methodology that justifies the use of an IDSS. The challenges faced by chains of panels of experts using data and models to deliver probabilistic beliefs has been noted in \citet{French2011}, who argued that very little had been said on the issue so far. Some early work on this type of problem was suggested for nuclear emergency management in \citet{French1991}, however, as for as we are aware, this paper is the first to approach these types of problem from a methodological and statistical viewpoint.  

To highlight the relevance of an IDSS to support policymakers in current, complex domains we start by discussing two applications where we have observed the necessity of knitting together different models into a coherent whole,  which together provoked the methodological development within this paper.

\subsection{Two systems we have appraised}
\label{appl}
\subsubsection{RODOS and nuclear emergency management}
In 1986 an explosion at one of the  reactors of the Chernobyl nuclear power plant released a radioactive plume into the environment contaminating large areas of the former Soviet Union. To protect people and food stocks, measures were taken by the governments of the affected countries. Further different and often conflicting responses were taken by many European countries after the accident, confusing the  public, and leading to an ineffective implementation of countermeasures \citep{Papamichail2013, Walle2008}. It was therefore quickly recognised that a comprehensive response to nuclear emergencies within the European community was needed. To achieve this, a common decision support system (DSS) for off-site emergency management was commissioned. Several institutes in Europe then started the development of the Real-time On-line DecisiOn Support system (RODOS) for nuclear emergencies, including uncertainty handling methodologies which would provide consistent predictions unperturbed by national boundaries \citep{Ehrhardt1993}. One author of this paper was heavily involved in this development.

Alongside the development of RODOS, in the early 1990s the International Chernobyl Project began to explore the factors that drove decision making about which protective measures to adopt after the Chernobyl accident. Since, at the time, many different parties and institutions were involved in this decision making process, the study was organised through conducting five decision conferences, where simple multicriteria decision analysis  models were used to explore the preferences and the beliefs of the different parties \citep{French2009}. The analyses performed during these meetings clearly showed that factors implemented in cost-benefit analyses, usually performed in nuclear emergency management, could not fully describe the preferential structure of the group. It was therefore decided that multicriteria methods had to be included in any operational DSS like RODOS designed for nuclear emergency response. Such a DSS would then combine scientific knowledge about the likelihood of different events with the value judgements about these to rank different agreed available policies and both facilitate the exploration and create a deeper understanding of the problem at hand. A sample output from RODOS is shown in Table \ref{table:policy} where high scoring countermeasures are detailed together with a breakdown of the impact of the policies on the relevant factors identified through decision conferencing. This type of supporting capabilities, presented in this and other more refined forms \citep{Papamichail2003,Papamichail2005}, were found to be vital for integrated decision support. This is because empirical research has shown that decision makers do not accept the suggestions of a system which does not provide a rationale for the outputs it produces, even if these outputs happen to be accurate \citep{Papamichail2013}.

\begin{table}
\begin{tabularx}{\textwidth}{@{} Y Y Y Y Y Y @{}} 
\toprule
\texttt{Strategy}&\texttt{Number relocated (thousands)}& \texttt{Number protected by other means (thousands)} & \texttt{Estimated number of fatal cancers averted}  & \texttt{Estimated number of hereditary effects averted}& \texttt{Cost (billions of roubles)}\\
\midrule
\texttt{SL2\_2}&\texttt{706}&\texttt{0}&\texttt{3200}&\texttt{500}&\texttt{28}\\
\texttt{SL2\_10}&\texttt{160}&\texttt{546}&\texttt{1700}&\texttt{260}&\texttt{17}\\
\texttt{SL2\_20}&\texttt{20}&\texttt{686}&\texttt{650}&\texttt{100}&\texttt{15}\\
\texttt{SL2\_40}&\texttt{3}&\texttt{703}&\texttt{380}&\texttt{60}&\texttt{14}\\
\bottomrule
\end{tabularx}
\caption{A sample RODOS output breaking down the score of various strategies for the different factors deemed relevant. 
\label{table:policy}}
\end{table}

An evaluation of the potential unfolding trajectories of an emergency was achieved by pasting together the outputs of  a suite of different subsystems (or modules). Each  such module provided estimates and forecasts for a different aspect of the emergency.  At that time there was an acute awareness that uncertainty evaluations had a critical role to play in such systems \citep{Smith1997}. However it was also the case that the formal accommodation of such uncertainties could not be made homogeneously. The system needed to use a variety of deterministic and stochastic methodologies to guide the estimation and the forecasting of the various quantities relevant to the domain under study \citep[][]{Ehrhardt1997,Smith1997}.  A few modules were statistical in nature, but others were guided by fuzzy logic and many others were entirely deterministic.

Once uncertainty management for these networks of systems became acknowledged as central to the effective implementation of the composite system, fully probabilistic component modules  began to be developed for communicating both the relevant panel's forecasts and their associated uncertainties. There were several examples of these. For instance, a source term module estimating the likelihood of a release of contamination from the plant was built \citep{French1995}. Others included atmospheric diffusion and deposition models describing the spread of contamination \citep{Smith1999,De2011}. Additional subsystems modelled the effect that the spread might have because of the exposure of  humans, animals and plants \citep{Richter2002,Zheng2009}. However these developments were patchy. This and the significant extra computational costs meant that any nuanced full integration of uncertainty evaluations associated with the whole system was severely inhibited. 

It was not only the lack of technology and the heterogeneity of uncertainty outputs of the different component models that challenged a comprehensive and proper accommodation of uncertainty. It was recognised early on that, \emph{even} if all models could produce faithful and consistent representations of the outputs' uncertainty for each module, it was  not at all clear at that time what formulae to use to combine these judgements and what background information justified that use of these forecasts. The fact that the component modules were designed to work independently of one another and that the exact judgements encoded within them were often contributed by very differently informed groups of scientists made this formal combination especially challenging. Because of these computational and methodological constraints, the modules' outputs ended up being collated together in a simple, essentially deterministic way by transferring from one module to another a single vector of means about what might happen and hence effectively ignoring any uncertainty associated with them. As statisticians, we appreciate how  such a na{\"i}ve method can be very misleading  \citep[see e.g.][]{Leonelli2013d,Leonelli2015}. 

Since that time the need for integrated support tools addressing other threats from uncertain environments has been recognised. Perhaps one of the most critical of these concerns food security.

\subsubsection{UK food security}
Food security, once thought to be a problem confined to low-income countries, is increasingly being recognised as a matter of concern in the UK \citep{DEFRA2009, ElliottReview2014,Lambie-Mumford2014,  Loopstra2015}, USA \citep{HFSS2012}, Canada \citep{Loopstra2012} and other wealthy nations. At a country level, all these nations appear to be the most food secure in the world \citep{GlobalFoodSecurityIndex}. But at household level, the story is quite different. Enabling its citizens to have access at all times to sufficient nutritious food for an active and healthy life is a key responsibility of governments, but achieving this is not straightforward. At first glance, UK household food security may seem to be a simple case of demand and supply. However on closer inspection the system is shown to be highly complex, especially from the point of view of policymakers, who endeavour to intervene on the system to produce specific responses \citep{DowlerMorris2000, Drewnowski2004, Dowler2015}. 

The food system is global, multifaceted and influenced by a huge number of public and private actions and uncontrolled factors such as weather and climate. This leads to a great deal of uncertainty about how any policy decision or strategy will play out. We are now taking up this new challenge. Since the domain is still developing we have had the opportunity to develop an overarching methodology to manage this system through the integration of diverse \emph{probabilistic} systems and through this a proper management of uncertainty. We are currently working with Warwickshire County Council to develop an IDSS to support decision-making around household-level food poverty. During this process we have recognized that a DSS to support policymakers in this new domain of application would need to have many features in common with RODOS, whilst embedding a complete uncertainty handling both within and between the constituent modules each informed by different panels of experts.

 Firstly, the system is multifaceted and heterogeneous and requires as inputs the judgements from different panels of experts in diverse disciplines including insight about factors elevating the risk to food security of households (from sociologists and local authorities), judgements about the effects of malnutrition on the population (from doctors and nutritionists), estimates of the availability of food in supermarkets and other outlets (delivered by supply chain experts) and forecasts of the yield of crops in a particular season (by crop experts and official statistics). Unless properly structured, this expert information is liable to conflict where two or more panels can sometimes deliver contradicting expert judgements about a shared random variable. If the system admits such contradictions then this can obviously threaten the coherence of the system as a whole so that its outputs become compromised. For instance, both estimates of cost of oil and weather forecasts affect food production, food transport and the ability of households to access food. If these latter variables are under the jurisdiction of different panels, any integrating system should surely embed common estimates of distributions over the cost of oil and weather forecasts and not contradicting ones. Otherwise how could it ever be coherent and justifiable? 
 
 Secondly, as was also true for the RODOS system described above, information that is available to inform each element of the system is patchy. Some parts of the system, for instance food production and household demography, are well modelled and informed by many different datasets and a great deal is known about the impacts of malnutrition \citep{Elia2010}. On the other hand, other subsystems are not necessarily so well informed. For instance, there is considerable uncertainty about world food availability because food imports, exports and prices are  highly variable and affected by a large number of factors such as weather, wars, international relations and even, unexpectedly, by another country's internal financial regulations \citep{Lagi2012, Lagi2012a}. Lastly, it is clear that any IDSS supporting this domain needs to be dynamic, not only because the food system is highly seasonal but also because there are many time steps for the consequences of a chosen policy  as well as shocks experienced by the system to unfold.

There is now a wealth of qualitative information in the sociological literature of the causes and impacts of food security at household level in the UK \citep{Dowler2001, Holmes2008, APPG2014,Dowler2014APPGresponse}. This provides a sound basis for modelling the qualitative structure of the system. Within the UK, \cite{Dowler2001} describes the strategies used by families to negotiate poor access to food resources. It has now been realised that strategies more subtle than taxing nutrient-poor food and subsidising nutrient rich foods are required to effect change in purchasing habits \citep{Darmon2014, Darmon2014a, Dowler2010, Elia2010, Friel2004, Holmes2008, PovertyPremium2014}.  Whilst the UK does not measure household-level food security, despite calls to do so for the last 20 years, culturally similar countries, the USA and Canada, regularly deploy an 18-question survey called HFSS \citep{HFSS2012}. This survey has also been implemented in small-scale studies in the UK \citep{Holmes2008, Pilgrim2012} providing insight into the extent to which USA and Canada findings may be used to provide a first approximation for the food poverty crisis now unfolding within the UK.

Many factors affecting household level risks for food security (disposable income, socio-economic status, social security levels, employment, housing and energy costs, access to credit, and so on) can be found for the UK in official statistics, at national, regional, county and sub-county level (lower layer super output areas (LSOA) and middle layer super output areas (MSOA)). For the Warwickshire decision-makers, MSOA level is the most appropriate, and a repository of data has been collated for use in a Warwickshire IDSS. 
 But where data are not available at this level, they can be modelled using the spatial granularity that is available or estimated from it by structured elicitation of expert opinion.  On the food supply side, there are some localities which are food deserts: so called because there local population is predominantly in low-income households and  the opportunity for profit is insufficient to incentivise supermarkets to site full-service stores. In such localities, households must either purchase food locally, typically at small convenience store  or finance transport to reach the larger stores. The small local stores often stock a smaller range of goods and often a much smaller range of fruits and vegetables often at a significantly higher price \citep{Dowler1999}.  The number of regions in the UK which are food deserts is likely to increase as UK supermarket giants close stores to address their falling profits \citep{GoldmanSachs2014}.  
 
Whilst  numerous DSSs exist to model aspects of the systems such as for supermarket siting and to inform pesticide and fertilizer use \citep{Decuyper2014,Efendigil2009, Hernandez2000, Kuo2002}, the complex problem of developing a shared methodology that can guide the accommodation  of the wide range of   expertise and provide the information required to evaluate the efficacy of various policies designed to address food poverty issues has yet to be attempted. This and other applications, like the needs of the RODOS project, have motivated the methodological developments we present below.

\subsection{Some examples of established probabilistic composites}
Although, for example, in food security we are only now in the process of building a suite of fully probabilistic modules to populate the IDSS and in nuclear emergency management some parts of the system were still deterministic, for the purposes of this paper we henceforth assume that we are in the position where there already exist probabilistic models that will describe the different components that our integrating system will network together. The type of probabilistic DSSs which will form the components of our integrated system are now widely available in a variety of forms. Two of the most common probabilistic models for multivariate systems designed to be used by a single agent or panel in non-dynamic environments are BNs and influence diagrams. Both of these frameworks, whilst still being refined, are now well developed and have been applied to ever larger systems \citep{Aguilera2011,Gomez2004,Cowell2011,Molina2010}. 

The component domains now modelled are often complex, dynamic and can themselves collate diverse inputs. Most recently, significant methodological developments using, for example, object-oriented code have enabled these models to become progressively more expressive, efficient and applicable as components within the types of dynamic environments we address here \citep{Koller1997,Koller2001,Murphy2002,Nicholson2011}. However, BNs and their dynamic analogues are not the only framework around which probabilistic models have been built. Hierarchical Bayesian models of large-scale temporal spatial processes modelling, for example, the development of epidemics of one sort or another  are well established \citep{Best2005,Jewell2009,McKinley2014}. Other modelling tools that have also recently appeared are those based on probability trees \citep{Thwaites2010,Smith2010}.  These have been successfully applied in a number of applications where the potential development of a scenario is heterogeneous \citep{Smith2008,Smith2010}.

Another class of probabilistic models used in large, complex systems are based on probabilistic emulators \citep{Kennedy2001,OHagan2006}. These have, in particular, been widely used for climate and environmental modelling. Such methods are particularly useful when the underlying science is encoded using deterministic simulators based, for example, on collections of deterministic differential equations. A few costly runs from such massive simulators are taken from certain designed scenarios and these are then used to frame judgements across the whole space of interest using either Gaussian processes anchored at the results of the runs \citep{Conti2009,Sanso2008} or Bayes Linear methods which use plausible continuity assumptions to interpolate expectations and their associated uncertainties over the whole domain of interest \citep{Williamson2011,Williamson2012}. These are all able to provide probabilistic outputs and so, in particular, the various moments of critical features of the problem we will later demonstrate  we will need for our IDSS.

One element all these types of model have in common is that they are not just arbitrary probability models. As well as being able to deliver various conditional probability statements  if queried, all are also able to deliver a rationale that lies behind the delivered numerical evaluations. This rationale can of course take one or several different forms: an underlying scientific justification, based on experimental, survey or observational evidence, simulation runs from detailed complex systems, carefully elicited judgements from respected experts in the field and so on. But, whatever form the justification might take, this therefore means that if challenged - for example by some external auditor or regulator - a robust defence of the probability statements can be given. For example, forensic DSSs concerning DNA evidence are based on established scientific theory and a plausible dependence model. Economic forecasts will be based on defensible models of the economy together with observational data, climate change forecasts on simulator runs coupled with emulators that interpolate these results more widely under plausible smoothness assumptions, BNs of ecological systems on carefully elicited structural relationships between its variables and their conditional probability tables, also usually themselves supported by observational and scientific data. These types of explanation are a vital component for any DSS making it a compelling, practical tool.

The point is therefore that for any sort of decision support it is not enough to demand only coherence in its formal sense. It is also essential to be able to provide - if called upon to do so - a narrative which defends the probability models and is plausible enough to encourage a third party scrutinising the model to at least suspend their beliefs and accept that the probability statements are plausible  \citep{Smith1996}. Statistically motivated systems with this property are sometimes called \emph{internalist} by philosophers: see e.g. \citet{Peterson2009}. All probabilistic models in this domain we are aware of are not simply black boxes but have this additional property. So this facility will also henceforth be assumed for any contributing component model.  

The diverse collection of different types of probability models -  each carrying its own supporting narrative -  are now available to support single panels of experts within a composite system. Some of these models may be very large, encompassing long vectors of explanatory observables and modelling complex relationships between them. Others might be entirely subjective and reflect the expert judgements of the panel in a probabilistic form. But in all cases, it is reasonable to request that a panel delivers a collection of outputs - typically various expectations of functions of explanatory conditioning variables - together with  the ability to supply a supporting narrative of the type discussed above. We will demonstrate that our methodology then determines, not only what these summary outputs should be, but also describes how they can be processed to provide a decision centre with a coherent and global picture of the process as a whole.

\subsection{What an IDSS does}
Given a suite of different models, like the ones reviewed in the previous section, an overarching probabilistic methodology needs to enable us to accommodate the diversity of information and its intrinsic uncertainty coming from these submodules into the system. The motivation of this paper is to determine when and how a supporting narrative can be composed around the component narratives discussed above that can then be used to explain to any outside auditor the rationale behind the choice of decision taken in both a formally justifiable and plausible way. To address this issue we start by looking at a small system and then gradually increase the size of applications so that the ideas can also be applied to the large domains of nuclear emergency management and UK food security applications reviewed in Section \ref{appl}.

Under certain hypotheses, and given a variety of structural assumptions, we are able to develop a methodology, similar to a standard Bayesian one, where decisions can be guaranteed to be coherent, i.e. expected utility maximising for some utility and probability distribution derived from individual but connected suites of models of the types discussed above, and defensible enough to support a composite narrative in a sense we will define precisely in Section \ref{sec:coherence}. The derived IDSS will then be able to fully support a subjective expected utility maximising crisis centre or policy making forum in a justifiable way and help it draw together all the evidence distributed across different sources whilst properly taking into account the strength of the evidence on which these judgements are made. We demonstrate in particular that  these properties are often implicit when, in a formal sense, the system is casual: an assumption we later argue is implicitly made when building real models.

The methodology developed in this paper then provides not only a framework for faithfully encoding all usable and informed expert judgements and data leading to scores for competing candidate policies but also an overarching narrative explaining the derivation of these scores. This narrative will be composed of a sequence of sub-narratives delivered by the particular relevant panels of experts. So, in this sense, it is based on best evidence. This then provides a platform around which a decision centre can discuss the evidence supporting one policy against another. On the basis of this platform, assessments can be discussed and revised where deemed necessary:  an interactive capability commonly recognised vital to any such IDSS.

We show that sufficient conditions under which such an interactive IDSS can be built are ones that lead to the system being distributive. By this we mean that it is coherent for each panel to autonomously focus only on its own field of expertise and update its beliefs about the domain under its jurisdiction when new evidence is introduced in the IDSS. We later show that we can often attain this property provided that the IDSS has an appropriate protocol guiding the nature and quality of the data input by each of its component systems. This distributivity property gives the added benefit that the expected utilities it needs can typically be calculated very quickly using algorithms analogous to fast propagation algorithms used in BNs. These algorithms are customised to an overarching agreed dependence structure across the system as a whole and have recently been discussed in  \citet{Leonelli2015}. They then constitute the inferential engine of an IDSS and make their outputs not only formal and transparent but also feasible to implement.

We start setting up this formal framework for the combination of panels' judgements and subsystems in Section \ref{sec:toy} and introduce a toy example to illustrate the challenges and opportunities presented by even very small networks of systems. We then prove in Section \ref{sec:coherence} some key results that enable us to address these new inferential challenges in  more complex settings. In particular, we derive a set of conditions which ensure an IDSS is coherent and a faithful expression of this composite process in a sense to be made explicit later. We also show how and when such a system can legitimately devolve judgements to domain experts so that the IDSS remains distributed and so  feasible as well as sound. In these settings both estimation and validation can be performed locally by the individual panels of experts contributing to the composite inference. In Sections \ref{sec:frameworks} and \ref{sec:MDM} we proceed to illustrate our methodologies as they might apply to a range of different overarching structures, including dynamic ones.

\section{Networks of probabilistic expert systems}
\label{sec:toy}
\subsection{Some technical structure}
Assume that the different components of a network of processes are evaluated and overseen by $m\in\mathbb{N}$ different panels of domain experts, $\{ G_{1},\dots, G_{m}\} $, and let $[m]=\{1,\dots,m\}$. Examples of such diverse panels operating various components of a network have already been given in Section \ref{appl}. Let $d\in \mathbb{D}$ denote a control strategy or policy a decision centre might adopt from a class $\mathbb{D}$ of available policies. We envisage that a large vector of random variables measures various features of an unfolding future. Henceforth denote the vector of these random vectors by $\boldsymbol{Y}=( \bm{Y}_{i}^\T)_{i\in[m]} $, where $\bm{Y}_{i}$ takes values in $\mathbb{Y}_{i}(d)$, $i\in[ m],\ d \in \mathbb{D}$. Often these random vectors will be indexed by time. Panel $G_{i}$ will be responsible for the output vector $\bm{Y}_i$, $i\in[m]$. The implicit (albeit virtual) owner of beliefs expressed in the system will be henceforth referred to as the \emph{supraBayesian} (SB).

For each $d\in \mathbb{D}$, each panel $G_{i}$ will be asked to give the SB various summaries of the probability distribution of the subvector $ \bm{Y}_{i}$ of $\bm{Y}$ over which $G_{i}$ has oversight, conditional on certain measurable functions taking values $\left\{ A_{i}( \bm{Y}):A_{i}\in \mathbb{A}_i\right\} $ where $\mathbb{A}_{i}$ could be null. So for example $\mathbb{A}_{i}$ could be the set of different possible combinations of levels of the covariates on which the vector $\bm{Y}_i$ might depend. The SB will need to process these \emph{necessary probabilistic features} provided by the different panels. She will then use these to calculate various statistics of a potential decision centre's \emph{reward} vector, some function of $\bm{Y}$. Using these features, the SB will then calculate her expected utility scores $\left\{ \bar{U}(d):d\in \mathbb{D}\right\} $ for each policy to which she might commit. Such scores will obviously be functions of the centre's utility $U(\bm{y},d)$, where $\bm{y}$ is an instantiation of $\bm{Y}$, drawn from some class of utilities $\mathbb{U}$, with any structural modelling assumptions and the probability statements provided by the individual panels. Note that, as illustrated in the RODOS example above, such utility functions will usually have several attributes as its arguments. With these scores the decision centre will then be able to identify a decision $d^{\ast }\in  \mathbb{D}$ with the associated highest score $\bar{U}(d^{\ast })$ together with other high scoring decisions for further comparison and discussion.

Ideally, for an IDSS to be defensible, it should endeavour to accommodate probabilistic information provided only by the most well-informed experts. For this to happen, it would make sense for different choices of decisions $d \in \mathbb{D}$ for each panel \emph{only} to donate probabilistic summaries associated with their own \textbf{particular domain of expertise}, and not their beliefs about the whole vector of components $\bm{Y}$.  Some conditions that lead to the necessity of this from a formal coherence viewpoint are given in Section \ref{sec:coherence}. 

Assume then that $G_{i}$ will be able, possibly with the use of their own probabilistic DSS, to perform probabilistic inference over its own particular domain of responsibility. Typically, in practice, the panel's chosen system will support decisions over much more complex scenarios than those concerned in the specific crisis management or policy forum our IDSS might be designed to inform. For this task, as discussed in more detail later, the integrating system will usually only need panels to deliver certain \textbf{distributional summaries} of the measurements $ \boldsymbol{Y}$ under each chosen decision. So for example a production module for a food security IDSS may be able to predict yield of a particular produce by farm. However to inform the flow of food in the system it would need only give its aggregate forecasts associated with produce as it arrives at market.  The SB will then use these summaries as her own, in a way we will outline later in Section \ref{sec:MDM}.

 So assume that, for $i\in[m]$, $G_{i}$  will be required to deliver to the IDSS belief summaries denoted by
\begin{equation*}
\Pi _{i}^{y}\triangleq \left\{ \Pi _{i}^{y}(A_i,d):d\in \mathbb{D}, \ A_i \in \mathbb{A}_i\right\}.
\end{equation*}
These summaries will typically be various \emph{expectations} of certain functions of $\boldsymbol{Y}_{i}$ \emph{conditional} on the values in $\mathbb{A}_{i}$ taken by some subvector of $\boldsymbol{Y}$ for each $d\in \mathbb{D}$. So, for instance, in a BN these might be the expected probabilities in a set of conditional probability tables and $\mathbb{A}_i=  \times_{j\in\Pi_i} \mathbb{Y}_{j}(d)$,  where the parents of $\boldsymbol{Y}_i$ are $\{\bm{Y}_{j}: j\in\Pi_i\}$ and $\Pi_i\subseteq[i{-1}]$. Note however that $\mathbb{A}_i$ does not necessarily need to be a product space. For example  in Section \ref{sec:tree} we discuss an IDSS for which our methods still apply but whose asymmetric structure does not admit such tabular form.

We show below that the belief summaries $\Pi_{i}^{y}$ can be determined once an overarching dependence framework has been agreed by all panellists in the system. It will contain \emph{only} those quantities $G_{i}$ will be required to deliver to the IDSS so that the IDSS is able to calculate its expected utility scores:  under quite general conditions often turning out to needing only to be  short vectors of expectations of certain functions conditional on the observations of certain events. This property - defined for the specific purpose of the IDSS - is central to being able to define a feasible IDSS even for large dynamic systems.


Henceforth we assume that all panellists make their inferences in a parametric or semi-parametric setting where $\boldsymbol{Y}$ is parametrised by $\boldsymbol{\theta }=\left( \bm{\theta}_i\right)_{i\in[m]} \in \Theta (d),\ d \in \mathbb{D}$. Here the parameter vector $\boldsymbol{\theta }_{i}$ parametrises  $G_i$'s relevant sample distributions,  $i\in[m]$. This may be infinite dimensional. When the parameter space of the system can be written as a product space, $\Theta(d) = \times_{i\in[m]}\Theta _{i}(d)$,  where $\Theta_i(d)$ is $G_i$'s parameter space, we say that panels are \emph{variationally independent} \citep[see][]{Dawid2001}. We henceforth assume this property holds. Were this not so then it would be necessary for a panel to state its beliefs about the value of $\boldsymbol{\theta} _{i}\in \Theta _{i}(d)$ in terms of parameters of the sample distributions of other panels.  We need to try to avoid this dependence so that it is possible for the system to be distributive. We show that, happily, many causal systems can be parametrised so that variational independence does indeed  hold.

In this parametric setting, for each $d\in \mathbb{D}$ that might be adopted, each panel $G_{i},$ 
$i\in[m]$,  has two quantities available to  them. The first is a set of \emph{sample summaries} over the future measurements for which they have responsibility 
\begin{equation*}
\Pi _{i}^{y|\theta }\triangleq \left\{ \Pi _{i}^{y|\theta }(\boldsymbol{\theta }_{i},A_i,d):\boldsymbol{\theta }_{i} \in \Theta _{i}(d), A_i \in \mathbb{A}_i, d\in \mathbb{D}\right\}.
\end{equation*}
These $\Pi _{i}^{y|\theta }$ might be the set of sample distributions associated with the predicted process 
\[
\left\{ f_i(\boldsymbol{Y}_{i}\ |\ A_i, \boldsymbol{\theta }_{i}):\boldsymbol{\theta }_{i} \in \Theta _{i}(d), A_i \in \mathbb{A}_i, d\in \mathbb{D}\right\}, 
\] 
where $\boldsymbol{\theta }_{i}\in \Theta _{i}(d)$ parametrises $f_i(\boldsymbol{Y}_{i}\ |\ A_i, \boldsymbol{\theta }_{i})$. For example, if $\boldsymbol{Y}$ were discrete and finite, then each panel might be asked to provide certain multi-way conditional probability tables over their subvector $\boldsymbol{Y}_{i}$, conditional on each $A_{i}\in \mathbb{A}_{i}$ and $d\in \mathbb{D}$. In this case $\boldsymbol{\theta }_{i}\in \Theta _{i}(d)$ would be the concatenated probabilities within all these tables for that chosen $d \in \mathbb{D}$. We have already noted that when a panel $G_i$ is supported by its own probabilistic system, then a typically much longer vector of parameters $\boldsymbol{\phi }_{i}\in \Phi _{i}(d)$ may be available to $G_i$  on which the panel is prepared to communicate uncertainty judgements. In this case, typically $\boldsymbol{\theta }_{i}$ will be a low-dimensional function of $\boldsymbol{\phi }_{i}$ capturing only $G_i$'s beliefs about features an IDSS needs. So, for example, a panel may have available a DSS designed to predict the health consequences of poisoning. But if an IDSS is designed to be used in an incident centre after a radiological accident, only the effect of poisoning from radiation, within the ranges of exposure of the accident and within ranges considered dangerous, will be needed for the decisions supported by the IDSS. Hence only the parameters of the margins of those features would need to appear in the $\boldsymbol{\theta }$ vector.

Second we will assume that each panel is able to express, and explain if questioned, its beliefs
\begin{equation*}
\Pi _{i}^{\theta }\triangleq \left\{ \Pi _{i}^{\theta }(A_i,d):A_i \in \mathbb{A}_i, d\in \mathbb{D}\right\}, 
\end{equation*}
about the parameters $\boldsymbol{\theta }_{i}\in \Theta _{i}(d)$, $d\in \mathbb{D}$, of its associated conditional distributions of $\boldsymbol{Y}_{i}\ |\ A_{i}\left( \boldsymbol{Y}\right)$. Most generally, \emph{panel beliefs} $\Pi _{i}^{\theta }(d)$ might be expressed in terms of \emph{panel densities} $\pi _{i}\left( \boldsymbol{\theta }_{i}\ |\ A_i, d\right)$. So in our example, this would be a joint probability distribution over all the probabilities specified in the conditional tables above. Note that a panel would not normally need to divulge how these judgements were made. For example, it would not need to show the details of any prior to posterior analyses \emph{unless} the panel were interrogated, for example, during emergency conferencing at the time of a crisis or by a regulator assuring the quality of the system before a crisis occurred.  In a parametric IDSS the vector of summaries $\Pi _{i}^{y}$, mentioned above, can obviously be calculated by $G_{i}$ through marginalisation. 
 
 Note that the inference performed by panel $G_i$ to provide their outputs is autonomous. What this means is that they should have available not only their outputs but also evidence about the statistical validity  of the structure and distributions they might define. This statistical justification - demonstrated by, for example, various diagnostic plots demonstrating the plausibility of modelling assumptions made within their component in the light of hard data evidence - can be assumed to be available on request, i.e. an audit trail behind each panel's probabilistic judgements is in place if the centre needs to query a panel's outputs. Such demonstrations will be henceforth assumed to be part of any supporting narrative, accessible through querying the component input.

Panel $G_i$ will of course use various data available to them to infer their distribution of $\bm{\theta}_i$. If they do this, they will typically perform this inference autonomously: i.e. without reference to the other panellists. Now, it is by no means automatic that such autonomous updating will be justified if $G_i$'s inferences are going to be inherited by the composite system. Examples \ref{Ex:2_5} and \ref{ex2.6} below give illustrations of when such autonomy is not formally justified. Later in the paper we determine sets of sufficient conditions when such delegation is formally possible. 
\begin{definition}
We call an IDSS \emph{distributed} if the SB's beliefs are functions of the autonomously calculated beliefs of the individual panels $G_1,\dots, G_m$. 
\end{definition}

\subsection{Common knowledge assumptions for an IDSS}
Let us begin by assuming that after a series of decision conferences \citep{French2009} held jointly across the panels, and electronic communications, stakeholders and users have all agreed the types of decisions the IDSS will support to a sufficient level of specificity and provided an agreed qualitative framework across all interested parties around which a quantitative framework can subsequently be built. To this purpose we assume three properties hold.
\begin{property}[Policy consensus]
All agree the class of decision rules $d\in \mathbb{D}$ whose efficacy might be examined by the IDSS.
\end{property}
\noindent This class of feasible policies considered will depend not only on what is logical, such as when various pieces of information are likely to become available, but also what might be acceptable and allowable, either legally or for other reasons. Again the choice of $\mathbb{D}$ will often be resolved using decision conferencing across panel representatives, users and stakeholders, as was the case during the Chernobyl project and the construction of RODOS discussed above. In the case of the county council policy analysis, the decision space $\mathbb{D}$ contains the different ways to legally implement central government cutbacks in the services provided to the needy and vulnerable in the county. Although we do not dwell on this point here, for an efficient and transparent system it is critical to customize this functionality carefully, so that the IDSS supports the real decision-making of the centre.
\begin{property}[Utility consensus]
All agree on the class $\mathbb{U}$ of utility functions supported by the IDSS.
\end{property}
\noindent In the complex multivariate settings we address here, a utility function $U(\bm{y},d)$ needs to entertain certain types of preferential independence across its various attributes, where these attributes will need a priori to be agreed. In the case of RODOS these were usually measures of health consequences, public acceptability and cost of each possible countermeasure policy taken over space and time. In both our illustrative examples the family of utility functions is simply one of value independence \citep{Keeney1993} although this is certainly not a necessary condition for our methods to apply \citep{Leonelli2015}.
\begin{property}[Structural consensus]
All agree the variables $\bm{Y}$ defining the process, where, for each $d\in \mathbb{D}$, each $U\in \mathbb{U}$ is a function of $\bm{Y}$, together with a set of qualitative statements about the dependence between various functions of $\bm{Y}$ and $\bm{\theta}$. Call this set of assumptions the \textbf{structural consensus set} and denote this by $\mathbb{S}$.
\end{property}
\noindent This last consensus might be expressible through an agreement about the validity of a particular graphical or conditional independence structure across not only the distribution of $\left( \bm{Y}\ |\ \bm{\theta }\right) $, but also the one of $\bm{\theta }$ \citep{Smith1996}.  This is then hard-wired into the IDSS. These types of assumptions are often complex, so we defer their discussion to later in the paper and examples of these, including those used in our illustrative applications, will be given in Sections \ref{sec:frameworks} and \ref{sec:MDM} below.  Other information that might be included in $\mathbb{S}$ could be a consensus about certain structural zeros or known logical constraints arising a shared understanding of the meaning of certain variables. 

\begin{definition}
Call the set of common knowledge assumptions shared by all panels and which contains the union of the utility, policy and structural consensus $\left( \mathbb{U},\mathbb{D},\mathbb{S}\right) $ the \textbf{\emph{CK class}}.
\end{definition}
\noindent Technically we can think of the CK class as the \emph{qualitative} beliefs that are shared as common knowledge by all the panel members and potential users, who all know they know, and so on. The CK class will be the foundation on which all inference within the IDSS will take place. Note that this class will depend not only on the domain and needs of users of the system, but also on the constitution and knowledge bases of the panels.
\begin{definition}
Call an IDSS \textbf{\emph{adequate}} for a CK class $\left( \mathbb{U},\mathbb{D},\mathbb{S}\right) $ when the SB can unambiguously calculate her expected utility score $\bar{U}(d)$ for any decision $d\in \mathbb{D}$ and  any utility function $U\in \mathbb{U}$ from the panel marginal inputs $\Pi _{i}^{y}$ provided to her by   $G_{i}$, $i\in[m].$ 
\end{definition}
\noindent An adequate IDSS will be able to derive a unique score for each $d\in\mathbb{D}$ on the basis of the panels' inputs. An IDSS clearly cannot be fully functional unless it has this property. Note that it should be immediate from the formulae of a given probabilistic composition to calculate these expectations whether or not the system is adequate. We illustrate such formulae later in the paper. 

To calculate $\left\{\bar{U}(d):d\in \mathbb{D}\right\}$ the SB will need, together with the CK class,  enough probabilistic information to compute the expectations of the corresponding utilities. At worst this might need to be the full distribution of $\bm{Y}$. More commonly, for typical choices of $\mathbb{U}$, all that might be needed is the distribution of the margins on certain specific functions of $\bm{Y}$ or simply some summaries such as a selection of its moments, again indexed by $d$.

To be defensible - in the sense that the explanations of the appropriateness of its delivered outputs provided by panels can also be legitimately adopted by the IDSS - a parametric IDSS needs another property.

\begin{definition}
Call an IDSS \textbf{\emph{sound}} for a CK class $\left( \mathbb{U},\mathbb{D}, \mathbb{S}\right) $ if it is adequate and, by adopting the structural consensus, the SB would be able to admit coherently all the assessments $\Pi_{i}^{y|\theta}$ and $\Pi _{i}^{\theta}$ (and hence $\Pi _{i}^{y}$) as her own, the SB's underlying beliefs about a domain overseen by a panel $G_{i}$ being $\{ \Pi _{i}^{y|\theta },\Pi _{i}^{\theta}\}$, $i\in[m]$.
\end{definition}
\noindent In Theorem \ref{theo:gold} we give a set of necessary conditions that in general guarantee the soundness of an IDSS. We note that in a surprising array of different circumstances an IDSS can be designed so that it is sound.

 A sound IDSS does not necessarily need to embody the \emph{full} beliefs held by all panel members and  based on the totality of their own individual evidence. This would often be inappropriate for a shared belief system, whose outputs will need to be defensible. For example, the evidence used to form the subjective judgements of individual panel members, although compelling to them, may derive from poorly designed experiments or simply be anecdotal. Because such information could not be robustly defended, it might not be possible for the centre to adopt it. So for example, an as yet unpublished observational study on those exposed to radiation after  Chernobyl might strongly indicate that the effects of increase of cancers, commonly predicted, have been grossly exaggerated. Although the relevant panel might find this strongly compelling, it might not be appropriate to input this information into a common knowledge system because the study has yet to be adopted generally by the scientific community.

 The sound IDSS does, however, present a defensible and conservative position all panellist should be happy to communicate and provide a benchmark for further discussion. In this sense the beliefs expressed in an IDSS are analogous to a pignistic belief system \citep{Smets2005}: the best legitimate belief statements that can be made if the centre is called to act under uncertainty in a coherent and justifiable way. 

To illustrate how these properties might apply even in a trivial setting we consider the following simple example. 

\begin{example}
\label{ex:bin}
Consider the simplest possible scenario where  $m=2$ and the CK class specifies that $\bm{Y}= (Y_{1},Y_{2})^{\textnormal{T}}$, where both $Y_{1}$ and $Y_{2}$ are binary. Here the random variable $Y_{1}$ is an indicator of whether or not a food stuff has become poisonous and  $Y_{2}$ is an indicator of whether or not sufficient quality controls are in place to ensure that any contamination is detected before the food is distributed to the public. A family of sample distributions $\Pi_{1}^{y|\theta }$ given by panel $G_{1}$, expert in the processes that might lead to poisoning, is saturated so that $\theta_{1}\triangleq P(Y_{1}=1)$. Panel $G_{2}$, consisting of experts with a good knowledge of quality control systems, has beliefs about the probabilities 
\[
\begin{array}{cc}
\theta _{20}\triangleq P(Y_{2}=1\ |\ Y_{1}=0),\;\;\;&\;\;\;\theta_{21}\triangleq P(Y_{2}=1\ |\ Y_{1}=1).
\end{array}
\]
Write $\bm{\theta }_{2}\triangleq \left(\theta _{20},\theta _{21}\right)^{\textnormal{T}}$. If within the CK class $\mathbb{U}$ is an arbitrary utility on $\bm{Y}$, then for adequacy the SB will need to be able to calculate her expected joint probability table of $\bm{Y}$, i.e. the expectations $\bar{\bm{\mu }}\triangleq \left( \bar{\mu }_{00},\bar{\mu }_{01},\bar{\mu }_{10},\bar{\mu }_{11}\right) $ of $\bar{\bm{\theta }}\triangleq \left( \bar{\theta }_{00},\bar{\theta }_{01},\bar{\theta }_{10},\bar{\theta }_{11}\right) $, where by definition 
\begin{equation*}
\begin{array}{cccc}
\bar{\theta }_{00}=(1-\theta _{1})(1-\theta _{20}), & \bar{\theta }_{01}=(1-\theta _{1})\theta _{20}, & \bar{\theta }_{10}=\theta_{1}(1-\theta _{21}), & \bar{\theta }_{11}=\theta _{1}\theta _{21}.
\end{array}
\label{binaryexdep}
\end{equation*}
Assume that prior panel independence is within the CK class: i.e. there is a consensus between the members of the two panels that $\bm{\theta }_{2}$ is independent of $\theta _{1}$. Then writing 
\begin{equation*}
\begin{array}{cc}
\mu _{1}(d)\triangleq \mathbb{E}\left( \theta _{1}\ |\ d\right) , & \bm{\mu }_{2}(d)\triangleq \left( \mu _{20}(d),\mu _{21}(d)\right) \triangleq\left(\mathbb{E(}\theta _{20}\ |\ d),\mathbb{E(}\theta _{10}\ |\ d)\right),
\end{array}
\end{equation*}
we would have that 
\begin{equation}
\begin{array}{ll}
\bar{\mu }_{00}(d)=(1-\mu _{1}\left( d)\right) (1-\mu _{20}\left(d\right) ), & \bar{\mu }_{01}(d)=(1-\mu _{1}(d))\mu _{20}(d), \\ \bar{\mu }_{10}(d)=\mu _{1}(d)(1-\mu _{21}(d)), & \bar{\mu }_{11}(d)=\mu _{1}(d)\mu _{21}(d).
\end{array}
\label{exp uti}
\end{equation}
Suppose panels $G_{1}$ and $G_2$ are able to calculate, respectively 
\begin{align*}
\begin{array}{ccc}
\Pi _{1}^{\theta }=\left\{ \mu _{1}(d):d\in \mathbb{D}\right\},&\ \text{and}\ &
\Pi _{2}^{\theta }=\left\{ \bm{\mu }_{2}(d):d\in \mathbb{D}\right\}.
\end{array}
\end{align*}
Then the IDSS is \emph{adequate}.  Because of the properties the expectations, in this case the  belief summaries $G_i$ need to deliver are simply $\Pi^y_i=\Pi^\theta_i,\ i\in[2].$  This provides the SB with all the information she needs to calculate all her expected utility functions using the formulae in (\ref{exp uti}). The IDSS is also \emph{sound} since these inputs are consistent with the probabilistic beliefs of anyone with any probability model over $(\bm{Y},\theta _{1},\bm{\theta }_{2})$  who believed the agreed prior panel independence assumptions and held the expectations given above.
\end{example}

Note that, for any $d\in \mathbb{D},$ it is not a trivial condition that the SB can make the calculations she needs in terms of $\Pi _{i}^{y}$, $i\in[2]$. For example, if instead of providing its beliefs about the conditional probabilities, panel $G_2$ provided its beliefs about the margin of $Y_2,$ the marginal joint distribution of $\bm{Y}=(Y_{1},Y_{2})$ would not then  be fully recoverable since we have nothing from which to derive, for example, the covariance between $\theta_{1}$ and $\bm{\theta }_{2}$ which is needed to calculate the covariance between $Y_{1}$ and $Y_{2}$.  So, if structuring of the process is not performed beforehand, then post-hoc combinations of outputs from panels' models may not be formally possible.

\subsection{An illustration of some of the inferential challenges}
\label{sec:inf}
It is convenient at this stage to use another very simple example to illustrate which statistics need to be communicated by panels to an IDSS.

\begin{example}
\label{ex:ill}
Assume a CK class gives $\left( Y_{1},Y_{2}\right) $ the same meaning and sample space as in Example \ref{ex:bin}. However add to the CK class the additional structural assumption that $Y_{2}\independent Y_{1}\;|\;(\theta_{1},\theta _{2})$ whatever decision $d\in \mathbb{D}$ is made. Thus, once the probabilities of these events were known, it is generally accepted that learning that contamination had been introduced would not affect our judgements about the efficacy of the quality control regime. Suppose $G_{i}$ delivers the set of beta distributions $\textnormal{Be}(\alpha _{i}(d),\beta _{i}(d))$ for $\theta _{i}=P(Y_{i}=1)$, $d\in \mathbb{D}$, $i\in[2]$. Note that because of the structural assumption above, in the notation used in Example \ref{ex:bin}, $\theta_{2}=\theta _{20}=\theta _{21}$. 
Consider two possible CK classes: where a decision centre is known to draw its utilities $U_{i}\in \mathbb{U}_{i}$, $i\in[2]$, from one of the families below 
\begin{align*}
U_1(y_{1},y_{2},d) &=a_1(d)+b_{11}(d)y_{1}+b_{12}(d)y_{2},	&&\text{if }  U_1 \in \mathbb{U}_1, \\
U_2(y_{1},y_{2},d) &=a_2(d)+b_{2}(d)w, 									&&\text{if } U_2 \in \mathbb{U}_2,
\end{align*}
and $W\triangleq Y_{1}Y_{2}$ is an indicator of whether the public is exposed to the contamination, with $a_{1}(d), a_{2}(d)\in\mathbb{R}$ and $b_{11}(d),b_{12}(d), b_2(d)\in\mathbb{R}_{>0}$ for all $d\in\mathbb{D}$. If $\mathbb{U}_{1}$ is in the CK class then the SB needs only that $G_{i}$ supplies its mean $\mu _{i}(d)=\alpha_{i}(d)(\alpha _{i}(d)+\beta _{i}(d))^{-1}$ of $\theta _{i}$, $i\in[2]$, as a function of the decision $d\in \mathbb{D}$ taken: a simple one-dimensional summary. However if, instead,  $\mathbb{U}_{2}$ is in the CK class then the SB needs to be able to calculate 
\begin{equation*}
\mathbb{E(}W\ |\ d)=\mathbb{E}\left( \theta_{1}\theta _{2}\ |\ d\right),
\end{equation*}
for each $d\in \mathbb{D}.$ It is easily checked that the above panel summaries would no longer necessarily be adequate if $\mathbb{U}_{2}$ was in the CK class unless further assumptions were added.  Explicitly, the SB would also need to add to the CK class a global independence assumption $\theta _{1}\independent \theta _{2}$. If this was done then the distribution of $\theta =\theta_{1}\theta _{2}$ would be recoverable from the panels' expectations and thus
\begin{equation*}
 \mathbb{E}(W\ |\ d)=\mathbb{E}(\theta\ |\ d)=\mu _{1}(d)\mu _{2}(d),
\end{equation*}
would be well defined. So the IDSS would be adequate. 
\end{example}

To be feasible and of enduring usefulness, it is usually necessary to require that the IDSS is distributive so that panels can autonomously update their probabilistic beliefs about their area of responsibility as they receive new information. 
\begin{example} 
\label{ex:2.3}
Assume a random vector $\left( \bm{X}_{1},\bm{X}_{2}\right) $ is sampled from the same population as $\left( Y_{1},Y_{2}\right) $ in the model of Example \ref{ex:ill} and that, for each $d\in \mathbb{D}$, $\theta _{1}\independent \theta_{2}$  is in the CK class. Each panel $G_{i}$ next refines its probabilistic assessments by observing its own separate randomly sampled populations, $\bm{x}_{i}$, concerning $\theta_i$ alone, and then updates its parameter densities, given each $d\in \mathbb{D}$, from $\pi _{i}(\theta _{i}\ |\ d)$ to $\pi_{i}(\theta _{i}\ |\ \bm{x}_{i},d)$, $i\in[2]$. In this case,  the two panels need to deliver only their  posterior means $\mu _{i}^{\ast }(d,\bm{x}_{i})$, $i\in[2]$, $d\in \mathbb{D}$. The SB can then act coherently. By adopting all these beliefs as her own, she will act as if she had sight of all the available information and had processed this information herself. The IDSS is therefore sound and distributed. 
\end{example}

Note, however, that in the example above  the global independence assumption is critical for this distributivity property to hold. 
\begin{example}
\label{ex:2.4}
Suppose that $\theta_2 $ is not independent of $\theta_1$ so that  $\pi (\theta _{2}\ |\ \theta _{1})$  needs to be a function of $\theta _{1}$ for at least some $d \in \mathbb{D}'\subset\mathbb{D}$. Then, in the notation of Example \ref{ex:2.3}, for these $d \in \mathbb{D}'$,
\begin{equation*}
\pi _{2}(\theta _{2}\ |\ \boldsymbol{x}_{1},\boldsymbol{x}_{2}, d)=\int_{0}^{1}\pi _{2}(\theta _{2}\ |\ \theta_{1},\boldsymbol{x}_2, d)\pi_1(\theta _{1}\ |\ \boldsymbol{x}_1,d)\dr\theta _{2}.
\end{equation*}
where the prior dependence of $\theta_2$ on $\theta_1$ induces a dependence of $\theta_2$ on $\boldsymbol{x}_1$. So, in particular, $\mu _{2}^{\ast }(d,\bm{x}_{1},\bm{x}_{2})\neq \mu _{2}^{\ast }(d,\bm{x}_{2})$ in general. Therefore, by devolving inference to the two panels who learn autonomously,  the SB will \emph{not} be acting as a single Bayesian would by using $\mu_i^*(d,\bm{x}_i)$, $i\in[2]$. It follows that the system is no longer sound, although when supporting evidence remains unseen the SB will appear to act coherently. The explanation of her inferences can no longer be devolved to a single panel and so difficult to defend.  She will, implicitly, be assuming that $\theta_1 \independent \theta_2$, which is contrary to the reasoning $G_2$ would want to provide.
\end{example} 

Perhaps of even more importance, is to note that \emph{even} if global independence is justified \emph{a priori}, the assumption that data collected by the two panels and individually used to adjust their beliefs does not inform both parameters is also a critical one. There are two important special cases we examine below which give a flavour of this difficulty and illustrate why it is important to construct panels not only on the basis of domains of expertise but also whose composition matches, as far as possible, domains over which supporting vectors of measurements exist.

\begin{example}
\label{Ex:2_5}
Continuing from the last example, suppose that $G_{1}$ and $G_{2}$ both see their respective margin concerning the experiment in Table \ref{experiment}, where $100$ units from a population are randomly sampled, and each uses this experiment to update its respective marginal distribution on $\theta _{i}$ for a particular value of $d\in \mathbb{D}$, $i\in[2]$.
\begin{table}
\[
\begin{array}{ccccc}
Y_{1}\backslash Y_{2} & 0 & 1 &  &  \\ 
0 & 5 & 45 & 50 & n-x_{1} \\ 
1 & 45 & 5 & 50 & x_{1} \\ 
& 50 & 50 & 100 &  \\ 
& n-x_{2} & x_{2} &  &
\end{array}
\]
\caption{Experiment of Example \ref{Ex:2_5}.\label{experiment}}
\end{table} 
Then, if both began with a prior symmetric about $0.5$, each would believe that 
\begin{equation*}
\mu _{1}^{\ast }(d,x_{1})=\mu _{2}^{\ast }(d,x_{2})=0.5.
\end{equation*}
So were $\theta _{1}\independent \theta _{2}$ in the CK class, the utility function $U\in\mathbb{U}_2$ and data the individual panels used was na{\"i}vely restricted to the relevant margin, the IDSS would assign 
\[\bar{\mu }(d,x_1, x_2)=\mu_1^{\ast}(d, x_1) \mu_2^{\ast}(d, x_2)=0.25.
\]
Note that this inference contrasts with inferences the SB would make on seeing the whole table and assuming $\theta _{1}\independent \theta _{2}$ \emph{a priori}. With a fairly uninformative prior on the two margins, her posterior mean of $\theta_1 \theta_2$ would be approximately $0.05$, i.e. five times smaller than the expectation calculated above. 
\end{example}
 
So, when incorporating joint data of this type, it is not easy to preserve soundness. Experiments measuring a function of the variables with error - even when these functions relate directly to the predictions at hand - can induce similar difficulties. 

\begin{example} 
\label{ex2.6}
Continuing the setting in the example above, suppose it is only possible to see the table of randomly sampled counts associated with $W\triangleq Y_{1}Y_{2}$, i.e. the number of foodstuffs that have poisoned someone. Suppose the $100$ individuals in Table \ref{experiment} could be thought of as having been drawn from a Binomial experiment with $x_{w}=5$ values of $W=1$ within the sample. Suppose the SB uses this information directly: for example by introducing a uniform prior on $\theta_w \triangleq P(W=1)$. This would lead the SB to have a posterior mean of $\bar{\mu }(d,x_{w})\simeq \bar{\mu }(d,x_{1}\bm{,}x_{2})= 0.05$. However, observations have induced a dependence across  $\theta _{1}$ and $\theta _{2}$: the global independence assumption is no longer formally valid and, if we plan to demand that the IDSS is sound, the future distributivity of the system will be destroyed if this data is accommodated, and so frustrate future calculations. 
\end{example}

So, we have illustrated that, even in the simplest of networks, considerable care needs to be exercised before an IDSS can be expected to work reliably. Because of the simplicity of the examples above only means needed to be delivered. However once we move away from the case of two binary variables this is, generally, not the case. Nevertheless, we will see later that, as we increase the complexity of our problems, often each panel will need only to provide certain additional lower-order moments. In the next section we will prove some conditions which ensure our IDSS, however large, will be sound.  We will also discuss protocols for admitting data, that might be adopted by panels, designed to avoid the sorts of issues illustrated in the last two examples.

\section{Coherence and the IDSS}
\label{sec:coherence}
\subsection{Conditions to ensure a sound and distributive IDSS}
\subsubsection{Information and Admissibility Protocols}
\label{ssec:admis}
Suppose the IDSS is dynamically presented with a large amount of new information as time progresses.  In practice, within the totality of information conceivably available to panellists at time $t$, $J_{+}^{t}$, usually only a subset - the \emph{admissible evidence} - will be of sufficient quality and have suitable form to be integrated into an IDSS. The sorts of information excluded or delayed admittance might include evidence whose relevance is ambiguous or of a type which might introduce insurmountable computational challenges to an IDSS. An admissibility protocol is therefore needed to define the admissible evidence so that inferences made using the IDSS can be defended and feasibly and formally analysed within a required time frame. 

Let $I_{0}^{t}$ denote all the admissible evidence which is common knowledge to all panel members at time $t$. Let $I_{ij}^{t}$ denote the subset of this admissible evidence panel $G_i$ would use at time $t$ if acting autonomously to assess their beliefs about $\boldsymbol{\theta }_{j}$,  $i,j\in[m]$. Define the \emph{admissible evidence} as $I_{+}^{t} \triangleq \left\{ I_{ij}^{t}:i,j\in[m]\right\}$  and let $I_{\ast}^{t}\triangleq \left\{ I_{ii}^{t}:i\in[m]\right\}$ be the subset of the admissible evidence each panel $G_i$ would use to update $\bm{\theta}_i$, $i\in[m]$.

Of course, that there exist relevant protocols for the selection of good quality evidence for decision support is often assumed even within single agent systems, however, its explicit statement is frequently omitted. A notable exception is admissibility protocols for evidence concerning medical treatment where the Cochrane reviews are considered to be the gold standard in decision support \citep{Higgins2008}. Their purpose is to pare away information which might be ambiguous and so potentially distort inference, through a trusted set of principles relevant to the domain. This may seem restrictive, but in practical applications the need to be selective about experiments that can provide evidence of an acceptable quality before formally committing to policy - so that their adoption can be robustly defended - is universally acknowledged. The corresponding beliefs expressed within any support tool are therefore often, by their nature, conservative. Formal adoption in $I^t_+$ of evidence whose interpretation might be contentious should be avoided whenever possible.  Such information is best used as a supplement to the sound inference rather than being integrated into it. Note, however, that information in $J_{+}^{t}\backslash I_{+}^{t}$ can also be used formally by users and panellists to provide
diagnostic checks of the inference using $I_{+}^{t}$ alone. 

The demands for such an admissibility protocol of an IDSS are even more important than for standard single agent decision support, because of its collective structure. So here we assume that panels, both individually \emph{and} corporately, will agree an appropriate protocol for selecting suitable experimental evidence in line with good practices, mirroring Cochrane
reviews in ways relevant to their domain. However, one additional requirement is needed in this setting: the chosen admissibility protocol must also ensure that an IDSS remains distributed over time, for we have already argued that, if this is not the case, then the output of the IDSS is either dependent on arbitrary assumptions and difficult to calculate or, if distributivity is forced, will become
incoherent. We now explore the properties of a candidate set of admissible evidence $I_{+}^{t}$ that lead to an IDSS being both defensible and coherent.

\subsubsection{Conditional independence in CK parametric models}

We begin by assuming that the collective, as represented by the SB, is happy for its inferences to obey the (qualitative) semi-graphoid axioms given in the appendix of this paper \citep{Pearl1988,Smith2010}. These general properties are widely accepted as appropriate for reasoning about evidence when irrelevance statements are read as conditional independence statements: in particular Bayesian systems always respect these properties \citep{Dawid2001, Studeny2006}. Irrelevance statements can also be expressed in common language and so are more likely to form part of the common knowledge shared by a set of panellists \citep{Smith1996}. We later investigate how these ideas translate when all panellists are fully Bayesian.

In this setting, it is useful to recall that one useful definition of a parameter $\boldsymbol{\theta }\in \Theta $ in any parametric model can be phrased in terms of irrelevance. Explicitly, it might be common knowledge that  everything relevant to the future random vector $\boldsymbol{Y}$ that might be chosen from the totality $J_{+}^t$ of past information available at time $t$ is embodied in what we have learned about a parameter $\boldsymbol{\theta }\in \Theta $ and $I_+^t$. This would then imply the irrelevance statement that for all $d \in \mathbb{D}$
\begin{equation*}
\boldsymbol{Y}\independent J_{+}^t\ |\ \boldsymbol{\theta },I_{+}^t, d.
\end{equation*}
Here $I_{+}^t$ in particular contains the structural assumptions of the model, $\mathbb{S}$, and the sample distributions delivered by each panel. So, the SB's parameter vector is a concatenation of subvectors $\boldsymbol{\theta} =\left( \boldsymbol{\theta }_{i}\right)_{i\in[m]} $, where $G_{i}$ parametrises their delivered sample distributions by $\boldsymbol{\theta }_{i}$ , $i\in[m]$.

Now suppose it is common knowledge that the expected utilities $\bar{U}(d\ |\ I_{+}^t)$, $d\in \mathbb{D}$, posterior to observing $I^t_+$ and used by the SB to score her possible options can all be written in the form
\begin{equation*}
\bar{U}(d\ |\ I_{+}^t)\triangleq \mathbb{E}_{\bm{Y},\bm{\theta}|I_{+}^t}\left( \tau (\boldsymbol{Y},\boldsymbol{\theta}, d)\right), 
\end{equation*}
for some function $\tau (\boldsymbol{Y},\boldsymbol{\theta}, d)$. Further suppose it is common knowledge that $\tau (\boldsymbol{Y},\boldsymbol{\theta} ,d)$ has the form

\begin{equation*}
\tau (\boldsymbol{Y},\boldsymbol{\theta}, d)=\sum_{B\subseteq [m]}\prod\limits_{i\in B}\tau _{i,B}(\boldsymbol{Y},d)\rho_{i,B}(\boldsymbol{
\theta }_{i},\boldsymbol{Y}, d)
\end{equation*}
where, for $i\in[m]$,  the set 
$
\left\{ \tau _{i,B}(\boldsymbol{Y},d):B\subseteq [m],i\in B,d\in 
\mathbb{D},\boldsymbol{Y}\in \mathbb{Y}(d)\right\}
$
is known by $G_{i}$: a condition that encompasses many common classes of model \citep{Leonelli2015}. We will also see later that assuming most common structural frameworks as common knowledge, then under utility, policy and structural consensus the functions $\rho_{i,B}(\boldsymbol{\theta }_{i},\boldsymbol{Y},d)$ are simple functions of $ \Pi _{i}^{y|\theta }$, $i\in[m] $. 

\begin{definition}
Say that a CK class of an IDSS exhibits \textbf{\emph{panel independence} }to $
\{I_{+}^{t},d\}$ at time $t$ iff the SB believes that under any policy $d\in \mathbb{D}$ 
\begin{equation*}
\independent _{i\in[m]}\boldsymbol{\theta }_{i}\ |\ I_{+}^{t},d.
\end{equation*}
\end{definition}
\noindent Then if the IDSS exhibits panel independence, the expected utilities $\bar{U}(d)$ can be calculated using the formula

\begin{equation*}
\bar{U}(d\ |\ I_{+}^t)=\sum_{B\subseteq [m]}\mathbb{E}_{\boldsymbol{Y} |I^t_{+}}\left( \prod\limits_{i\in B}\tau _{i,B}(\boldsymbol{Y},d)\bar{\rho}_{_{i,B}}(\boldsymbol{Y},d)\right) ,
\end{equation*}
where $
\bar{\rho}_{_{i,B}}(\boldsymbol{Y},d)\triangleq \mathbb{E}_{\boldsymbol{\theta }_{i} | I^t_{+}}\left( \rho_{_{i,B}}(\boldsymbol{\theta }_{i},\boldsymbol{Y},d)\right)$. Now the key point here is that in this case the set of expectations 
\[
\bar{\rho}_{i}\triangleq \left\{ \bar{\rho}_{_{i,B}}(\boldsymbol{Y},d):B\subseteq [m],i\in B,d\in \mathbb{D},\boldsymbol{Y}\in \mathbb{Y}(d)\right\}, 
\] 
can be calculated \emph{locally} by panel $G_{i}$, $\ i\in[m]$. So all panels, and therefore the SB, can agree that a sufficient condition for  a system to remain distributed and adequate over all time is panel independence. So we next investigate conditions that will ensure that panel independence holds.

\subsubsection{Panel Independence and Common Knowledge}
We now define four properties to add to the CK class to ensure the soundness of an IDSS under a given admissibility protocol. Let $\bm{\theta }_{i^{-}}\triangleq \left( \theta _{j}\right)_{j\in[m]\setminus \{i\}}$.

\begin{definition}
\label{def:cond}
Say that a CK class of an IDSS is \textbf{\emph{delegable}} at time $t$ if for any
possible choice of policy $d\in \mathbb{D}$ and for $i\in[m] $ there
is a consensus that for all $\boldsymbol{\theta} \in \boldsymbol{\Theta}(d)$
\begin{equation}
I_{+}^{t}\independent \boldsymbol{\theta }\ |\ I_{0}^{t},I_{\ast }^{t},d
\label{delegatable}
\end{equation}
\textbf{\emph{separately informed}} at time $t$ if 
\begin{equation}
I_{ii}^{t}\independent \boldsymbol{\theta }_{i^{-}}\ |\ I_{0}^{t},\boldsymbol{\theta }
_{i},d  \label{sep inform}
\end{equation}
\textbf{\emph{cutting}} at time $t$ if 
\begin{equation}
I_{\ast }^{t}\independent \boldsymbol{\theta }_{i}\ |\ I_{0}^{t},I_{ii}^{t},
\boldsymbol{\theta }_{i^{-}},d  \label{cutting}
\end{equation}
\textbf{\emph{commonly separated}} at time $t$ if 
\begin{equation}
\independent _{i\in[m]}\boldsymbol{\theta }_{i}\ |\ I_{0}^{t},d  \label{commonly sep}
\end{equation}
\end{definition}
 An IDSS is delegable at time $t$ when it is in the CK class that, for any choice of future policy $d\in \mathbb{D}$, the totality of admissible evidence $I_{+}^{t}$ fed into the IDSS is the union of the evidence $I_{0}^{t}$ shared by all panels plus the aggregate of individual admissible evidence $I_{\ast}^{t}$ each panel has about its own particular domain of expertise at time $t$. Note that if the panel members are working collaboratively rather than competitively then this condition might be ensured through adopting a protocol where if one panel has new evidence which they think might inform another then they will immediately pass this on to that panel for appropriate accommodation: see below. Alternatively the protocol could itself simply demand that $I_{+}^{t}=\left\{I_{0}^{t},I_{\ast }^{t}\right\} $.

The next two assumptions then allow us to perform inference in the distributed way we will develop later. When a system is separately informed, pieces of evidence $G_{i}$ might collect individually \emph{will not be informative about the other parameters in the IDSS} owned by other panellists once the domain experts' evidence has been fed in. When a system is cutting, once $\left\{ I^t_{0},I_{ii}^{t}\right\} $ is known, no panel believes that another panel $G_{j}$ has used any information that $ G_{i}$ might also want to use to adjust its beliefs about $\boldsymbol{ \theta }_{i}$. This captures what we might mean when we call a panel `expert' over a particular domain. So, for example, to accommodate a piece of evidence, $G_{i}$ might first have needed to marginalise out a parameter in $\boldsymbol{\theta }_{j}$ because its sample distribution depended on this component. In this case the IDSS would not be cutting: this experiment told the SB not only about $\boldsymbol{\theta }_{i}$ but also $\boldsymbol{\theta }_{j}$. Formally, the assessment of these two parameters could then become dependent on each other \emph{a posteriori}, as illustrated in the last section. In practice the protocol would demand, to satisfy the separately informed condition, that a new piece of information would only be added to $ I_{ii}^{t}$  by $G_{i}$ if the strength of evidence it provided about $\boldsymbol{\theta }_{i}$ would not depend on $\boldsymbol{\theta }_{i^{-}}$. We see later that for a variety of overarching structure much information available to $G_{i}$ would satisfy this demand.  Typically this condition is broken with the loss of ancestral sampling in a BN  \citep[see, for example][]{Smith2010}.

When parameters are commonly separated all the \emph{information that everyone shares separates the parameters in the system}. Suppose all panels were constituted by the same people, the overarching system was a BN and the panels consisted of those deciding on the parameters $\boldsymbol{\theta }_{i}$ of the density of each vertex $Y_i$,  i.e. those defining the distribution of each variable conditional on its parents. Then this would reduce to the condition that global independence held at time $t$ \citep{Cowell1999}. From its proof, panel independence can actually be seen to be a consequence of the other properties we prove in Theorem \ref{theo:gold} below.

Now assume that the four conditions in Definition \ref{def:cond} all lie in $\mathbb{S}$. The following result, analogous to \citet{Goldstein1996} which concerned the use of linear Bayes methods and a single agent, can now be proved. 

\begin{theorem}
Suppose an IDSS for a CK class $\left( \mathbb{U},\mathbb{D},\mathbb{S}\right) $ is adequate where $\mathbb{U}$ and $\mathbb{D}$ are arbitrary and $\mathbb{S}$ includes the consensus that the IDSS is delegable, separately informed, cutting and commonly separated at time $t$. Then it will also be sound and distributed at time $t.$ Furthermore it is common knowledge that the SB's beliefs about each panel's parameter vector $\boldsymbol{\theta }_{i}$ are the same as those of the corresponding expert panel $G_{i}$, $i\in[m]$, for all $d\in \mathbb{D}$ and at any time $t \geq 0$. 
\label{theo:gold}
\end{theorem}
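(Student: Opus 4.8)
The plan is to reduce the entire statement to a single factorisation fact about the supraBayesian's posterior over $\boldsymbol{\theta}=(\boldsymbol{\theta}_i)_{i\in[m]}$: namely that, once the admissible evidence $I_+^t$ has been accommodated, her joint posterior splits as $\prod_{i\in[m]}\pi_i(\boldsymbol{\theta}_i\mid I_0^t,I_{ii}^t,d)$, with each factor being exactly the density panel $G_i$ would obtain by updating autonomously on $\{I_0^t,I_{ii}^t\}$. Everything the theorem asserts then follows mechanically: panel independence \emph{is} this factorisation, distributivity is the observation that the SB's beliefs are assembled from the autonomous panel factors, and soundness is the coherence of that assembly. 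I would fix $d\in\mathbb{D}$, write $C\triangleq\{I_0^t,d\}$ for the baseline conditioning set carried through every statement, and argue purely with the semi-graphoid axioms (symmetry, decomposition, weak union, contraction), which the collective is assumed to respect.

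The heart of the proof is a short chain of semi-graphoid manipulations. First I would produce the auxiliary statement $\boldsymbol{\theta}_i\independent\boldsymbol{\theta}_{i^-}\mid I_{ii}^t,C$, i.e. that $G_i$'s own evidence does not couple its parameter to the others'. This comes from commonly separated, which gives $\boldsymbol{\theta}_{i^-}\independent\boldsymbol{\theta}_i\mid C$, together with separately informed, which gives $\boldsymbol{\theta}_{i^-}\independent I_{ii}^t\mid\boldsymbol{\theta}_i,C$; contraction combines these into $\boldsymbol{\theta}_{i^-}\independent(\boldsymbol{\theta}_i,I_{ii}^t)\mid C$, and weak union then moves $I_{ii}^t$ into the conditioning set. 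Next I would feed this into cutting, which (once the redundant $I_{ii}^t$ is stripped from its left side) reads $\boldsymbol{\theta}_i\independent\{I_{jj}^t:j\neq i\}\mid I_{ii}^t,\boldsymbol{\theta}_{i^-},C$; a second application of contraction yields the key local statement
\[
\boldsymbol{\theta}_i\independent\bigl(\boldsymbol{\theta}_{i^-},\{I_{jj}^t:j\neq i\}\bigr)\mid I_{ii}^t,C .
\]
Decomposition of this gives $\pi_i(\boldsymbol{\theta}_i\mid I_*^t,C)=\pi_i(\boldsymbol{\theta}_i\mid I_{ii}^t,C)$ (the SB's marginal equals the autonomous panel posterior), while weak union gives $\boldsymbol{\theta}_i\independent\boldsymbol{\theta}_{i^-}\mid I_*^t,C$; collecting the latter over all $i$ is, for the SB's coherent beliefs, mutual independence $\independent_{i\in[m]}\boldsymbol{\theta}_i\mid I_0^t,I_*^t,d$.

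It then remains to upgrade the conditioning set from $I_*^t$ to the full $I_+^t$, which is exactly what delegability delivers: $I_+^t\independent\boldsymbol{\theta}\mid I_0^t,I_*^t,d$ says the off-diagonal evidence is uninformative about $\boldsymbol{\theta}$ once $\{I_0^t,I_*^t\}$ is known, so (using $I_*^t\subseteq I_+^t$) the posterior given $I_+^t$ coincides with the posterior given $I_*^t$. Hence panel independence $\independent_{i\in[m]}\boldsymbol{\theta}_i\mid I_+^t,d$ holds and the SB's marginal for each $\boldsymbol{\theta}_i$ remains $G_i$'s autonomous posterior on $\{I_0^t,I_{ii}^t\}$. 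This is precisely distributivity and the final common-knowledge assertion, since every step used only the semi-graphoid axioms shared by the whole collective. For soundness I would combine this $\boldsymbol{\theta}$-level factorisation with the structural consensus $\mathbb{S}$ and variational independence $\Theta(d)=\times_{i\in[m]}\Theta_i(d)$: the product parameter space imposes no cross-panel constraints, so the factorised posterior is a genuine distribution, and $\mathbb{S}$ lets the SB knit the sample families $\Pi_i^{y|\theta}$ into a coherent model for $\boldsymbol{Y}\mid\boldsymbol{\theta}$; together with the assumed adequacy this exhibits one joint law over $(\boldsymbol{Y},\boldsymbol{\theta})$ realising every $\{\Pi_i^{y|\theta},\Pi_i^{\theta}\}$ as the SB's own.

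I expect the main obstacle to be bookkeeping rather than depth: the statements in Definition \ref{def:cond} condition on sets that overlap their own left-hand sides (for instance $I_*^t$ appears inside both cutting and delegability), so each must be read as the non-redundant irrelevance it encodes and contraction must be applied in its correct incremental form, since a misordered step fails silently. A secondary subtlety is the extension to ``any time $t\geq 0$'': the fixed-$t$ argument must be iterated with the time-$t$ posterior serving as the time-$(t+1)$ prior, which is legitimate only if the admissibility protocol of Section \ref{ssec:admis} is designed to preserve all four conditions at every step, so I would flag this reliance explicitly rather than treat the temporal claim as automatic.
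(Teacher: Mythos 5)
Your proposal is correct and takes essentially the same route as the paper's own proof: your chain of semi-graphoid steps (commonly separated plus separately informed combined by contraction and weak union, then cutting by contraction, then decomposition/weak union, then delegability to pass from conditioning on $I_{\ast}^{t}$ to conditioning on $I_{+}^{t}$) reproduces the paper's intermediate statements exactly. The only difference is cosmetic: you invoke delegability once as a posterior-invariance statement for the whole vector $\bm{\theta}$, whereas the paper runs two separate semi-graphoid derivations, one yielding panel independence given $I_{+}^{t}$ and one yielding the updating property $\bm{\theta}_{i}\independent I_{+}^{t}\;|\;I_{0}^{t},I_{ii}^{t}$.
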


See Appendix \ref{gold} for a proof of this result. Certainly the conditions required in Theorem \ref{theo:gold}, above, are in no sense automatic.  We will show that, nevertheless, they are satisfied by a very diverse collection of models and information sets. So for example suppose that within our food security example $G_1$ was panel with expertise in food production and using a piece of probabilistic software for a model parametrized by $\bm{\theta}_1$. Suppose on the other
hand that $G_2$ was a panel with expertise in the effect of nutrition on a child's educational
attainment. Suppose that this panel based its judgements on a regression
model of such attainment on various food production indicator covariates parametrized by $\bm{\theta}_2$. Then delegability
would be the assumption that the composition of domain information
available to these two individuals covered the two areas - i.e. that both panels
could be thought of as expert. The separately and cutting informed hypotheses
asserts that neither expert has information available to them that they will use
in their own assessments of their own parameters that could also usefully be
used by other panelists. The commonly separated hypothesis assumes the priors
based on commonly available information could be set independently of each
other or other panels in the system. These are substantive assertions of course
but both $G_1$ and $G_2$ should be able to reflect on whether such assumptions
might be compelling. When considering the influence of their beliefs on each
other at least it is likely that $G_1$ and $G_2$ will be happy to accept the premises
of this theorem and so its conclusions, unless an experiment becomes
available that might confound these parameter vectors - see later.

 Note that this theorem holds irrespectively of the form of the utility function in the CK class: weaker conditions might guarantee adequacy for specific classes of utility factorizations \citep{Leonelli2015a}. Moreover, it applies whatever the definition of the underlying semi-graphoid, not only to probabilistic systems.  

\subsubsection{Likelihood separation in distributed probabilistic systems}
We now focus our attention onto  probabilistic systems and examine what soundness and distributivity might mean in this most common of contexts. Suppose an IDSS exhibits panel independence to $I_{+}^{0}$ at time $t=0$ so that the SB believes that $\independent _{i\in[m]}\boldsymbol{\theta }_{i}\ |\ (I_{+}^{0},d)$, $d\in\mathbb{D}$. Assume also that the only additional evidence presented to the IDSS by time $t$ by any panellist will be in the form of data sets $ \boldsymbol{x}^{t}=\left\{ \boldsymbol{x}_{\tau }:\tau \leq t\right\} $ which then populate $I_{+}^{t}$. The features that ensures the IDSS remains sound and distributed over time can be expressed in terms of the separability of a likelihood. Let $l(\boldsymbol{\theta }\ |\ \boldsymbol{x}^{t})$, $t\geq 0$, denote a likelihood over the parameter $\boldsymbol{\theta }$ of the distribution of $\boldsymbol{Y}$ given $\boldsymbol{x}^{t}$. Recall that subvectors of parameters associated with the probabilistic features delivered by the panel $G_{i}$ are denoted by $\boldsymbol{\theta }_{i},$ $i\in[m]$.

\begin{definition}
Call $l(\boldsymbol{\theta }\ |\ \boldsymbol{x}^{t})$ \emph{panel} \textbf{\emph{separable}} over the panel subvectors $\boldsymbol{\theta }_{i}$, $i\in[m]$, when, given admissible evidence $\boldsymbol{x}^{t}$, it is in the CK class  that for all $d\in \mathbb{D}$ 
\begin{equation*}
l(\boldsymbol{\theta }\ |\ \boldsymbol{x}^{t})=\prod_{i\in[m]}l_{i}(\boldsymbol{\theta }_{i}\ |\ \boldsymbol{t}_{i}(\boldsymbol{x}^{t}))
\end{equation*}
where $l_{i}(\boldsymbol{\theta }_{i}\ |\ \boldsymbol{t}_{i}(\boldsymbol{x}^{t}))$ is a function of $\boldsymbol{\theta }$ only through $\boldsymbol{\theta }_{i}$ and $\boldsymbol{t}_{i}(\boldsymbol{x}^{t})$ is a statistic of $\boldsymbol{x}^{t}$  known to $G_{i}$ and perhaps others, collected under the admissibility protocol and accommodated formally by $G_{i}$ into $I_{ii}^{t}$ to form its own posterior assessment of $\boldsymbol{\theta }_{i}$, $i\in[m]$.
\end{definition}

We now have the following theorem that gives good practical guidance about when and how soundness and delegatability can be preserved over time.

\begin{theorem}
Suppose an IDSS is adequate, delegable, separately informed, cutting and commonly separated at time $t=0$ and, for all times $t\geq 0$, data admitted to the system is panel separable at time $t$.  Then, provided the joint prior over $\boldsymbol{\theta }$ is absolutely continuous with respect to Lebesgue measure, the system is  sound and distributed at time $t$. On the other hand if at any time $t$ the system is not panel separable over a set of non-zero prior measures over the parameter vector $\boldsymbol{\theta }$ then the IDSS will no longer be sound or distributed. \label{theo:seplik}
\end{theorem}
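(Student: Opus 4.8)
The plan is to treat the two implications separately: the forward (sufficiency) direction by propagating a factorisation of the joint law through time, and the converse (necessity) direction by a contrapositive measure-theoretic argument, which I expect to be the delicate part. The unifying observation is that both soundness and distributivity are governed by \emph{panel independence}, $\independent_{i\in[m]}\boldsymbol{\theta}_i\mid I_+^t,d$: once this holds at time $t$, the expected-utility formula established under panel independence expresses each $\bar{U}(d\mid I_+^t)$ through the locally computable expectations $\bar{\rho}_i$, which is exactly adequacy together with distributivity, while the per-panel factors being the panels' own posteriors is exactly soundness.

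For the forward direction I would argue as follows. At $t=0$ the hypotheses are precisely those of Theorem \ref{theo:gold}, so that result supplies adequacy, soundness, distributivity and in particular panel independence to $I_+^0$; equivalently the prior factorises, $\pi(\boldsymbol{\theta}\mid I_+^0,d)=\prod_{i\in[m]}\pi_i(\boldsymbol{\theta}_i\mid I_+^0,d)$ for every $d\in\mathbb{D}$. I would then write the time-$t$ posterior via Bayes' rule,
\begin{equation*}
\pi(\boldsymbol{\theta}\mid\boldsymbol{x}^t,d)\propto l(\boldsymbol{\theta}\mid\boldsymbol{x}^t)\,\pi(\boldsymbol{\theta}\mid I_+^0,d),
\end{equation*}
and substitute the panel-separable likelihood together with the factorised prior. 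The product regroups factor-by-factor into $\prod_{i\in[m]} l_i(\boldsymbol{\theta}_i\mid\boldsymbol{t}_i(\boldsymbol{x}^t))\,\pi_i(\boldsymbol{\theta}_i\mid I_+^0,d)$, and absolute continuity of the prior with respect to Lebesgue measure yields a well-defined normalisation that splits across $i$. Hence $\independent_{i\in[m]}\boldsymbol{\theta}_i\mid I_+^t,d$, so panel independence is preserved, and each factor $\pi_i(\boldsymbol{\theta}_i\mid\boldsymbol{t}_i(\boldsymbol{x}^t),I_+^0,d)$ coincides with the posterior $G_i$ computes autonomously from the admissible statistic $\boldsymbol{t}_i(\boldsymbol{x}^t)$ it accommodates into $I_{ii}^t$. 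Feeding panel independence back into the expected-utility formula then returns soundness and distributivity at time $t$.

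For the converse I would argue contrapositively. Suppose the IDSS is sound and distributed at time $t$; by the definitions of these properties the SB's posterior over $\boldsymbol{\theta}$ must be reconstructable from the panels' autonomously updated beliefs and must therefore factorise as $\prod_{i\in[m]}\pi_i(\boldsymbol{\theta}_i\mid\boldsymbol{t}_i(\boldsymbol{x}^t))$. Dividing this factorised posterior by the factorised prior, legitimately on the set where the prior density is positive, exhibits $l(\boldsymbol{\theta}\mid\boldsymbol{x}^t)$ as a product $\prod_{i\in[m]} h_i(\boldsymbol{\theta}_i)$ of single-panel functions on that set, i.e. as panel separable. Taking the contrapositive, if the likelihood admits no such factorisation except on a prior-null set — exactly the hypothesis that it fails to be panel separable over a set of non-zero prior measure — then no factorised posterior can exist on a positive-measure set, and the SB's beliefs cannot be written as a product of autonomous panel beliefs, destroying both soundness and distributivity. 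The hard part will be the measure-theoretic bookkeeping here: ruling out an accidental factorisation of the posterior when the likelihood is genuinely non-separable, which is exactly where absolute continuity with respect to Lebesgue measure is indispensable, since it prevents the prior from concentrating on a lower-dimensional set where a non-separable likelihood could coincide with a separable one and makes the non-zero-prior-measure and positive-Lebesgue-measure failures interchangeable. I would therefore phrase the posterior factorisation as an almost-everywhere statement on the support of the prior, invoke uniqueness of the Radon--Nikodym density to pass from posterior factorisation back to likelihood factorisation, and check that the positivity set of the factorised prior is a product set (a consequence of prior panel independence) so that the coordinatewise division is valid; I would also read the converse at the \emph{first} time $t$ at which separability fails, since up to that time the forward argument keeps the prior factorised and supplies the factorised prior used in the division step.
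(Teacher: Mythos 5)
Your proposal is correct and follows essentially the same route as the paper's proof: the forward direction invokes Theorem \ref{theo:gold} at $t=0$ to obtain the factorised prior and then propagates that factorisation through Bayes' rule with the panel-separable likelihood, and your converse rests on the same equivalence between posterior factorisation and likelihood separability that the paper uses, merely phrased contrapositively rather than directly. Your extra measure-theoretic bookkeeping (Radon--Nikodym uniqueness, product positivity sets, first failure time) is more careful than the paper's terse treatment but does not constitute a different argument.
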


See Appendix \ref{proof:seplik} for a proof of this result. So, for example, by designing  a single experiment to be orthogonal over parameters $\boldsymbol{\theta }_{i}$  and $\boldsymbol{\theta }_{j}$, where $\boldsymbol{\theta }_{i}$ are parameters in $G_i$'s model and $\boldsymbol{\theta }_{j}$ in $G_j$'s, ensures, under the conditions of the theorem, that an IDSS is sound and distributed, $i,j\in[m]$. From this \emph{single }experiment  data can still be included in the admissible evidence for both panels $G_i$ and $G_j$ and the system will still remain distributive.
Note, however, that the converse demonstrates that some protocol is certainly needed to preserve the distributivity property, even approximately.

Henceforth in this paper, until the discussion, we will assume all data admitted into an IDSS ensures this separability property. 

\subsection{Causality and the IDSS}
\subsubsection{Causal hypotheses: control and experiments} 
Recent advances have been made in formalising causal hypotheses in order to make inferences about the extent of a cause. Most of the original work in this area centred on BNs \citep{Pearl2000,Spirtes1993}.  However, the semantics have since been extended into other frameworks \citep[see e.g.][]{ Dawid2002, Dawid2010, Eichler2007, Lauritzen2002, Queen2009, Smith2007}. Typically, these assume that there is a partial order associated to the system \citep{Riccomagno2004} and that, within a causal framework and its implied partial order, the joint distributions of variables not upstream of a variable, which is externally controlled to take a particular value, remain unaffected by the control, whilst the effect on upstream variables is the same as if the controlled variable had simply taken that value. These causal semantics are a special case of the property described below in a sense we explain later. If this property is contained in the CK class then this greatly
simplifies the learning each panel needs to undertake.

\begin{definition}
Call an IDSS $\left( \mathbb{D},d^{0}\right)$-\emph{determined} if $\Pi _{i}^{\theta }= \left\{ \Pi _{i}^{\theta }(A_i,d):d\in \mathbb{D}, A_i\in\mathbb{A}_i\right\}$  is a stochastic function of $\Pi _{i}^{\theta }(A_i,d^{0})$, known to $G_{i}$, for some prescribed decision $d^{0}\in \mathbb{D}$ and a specific $A_i\in\mathbb{A}_i$, this being true for the beliefs of all panels $G_{i}$, $i\in[m]$.
\end{definition}

Once each panel has specified its beliefs about its own parameter vector $\boldsymbol{\theta} _{i}$ under a particular decision then, clearly, if the above property holds, the panel beliefs under other decisions can be calculated automatically. Obviously whether and how this condition is
satisfied depends heavily on the domain of the IDSS. However, surprisingly it is often met, sometimes implicitly as in the Kalman Filter \citep{Brockwell2002}. Perhaps more critically, it is also implicit for causal systems in the following sense.

\begin{lemma}
A causal BN, whose vertex  set is $\{Y_i: i\in[m]\}$, in a given CK class is a $\left( \mathbb{D},d^{0}\right)$-determined IDSS whenever the following three conditions hold: 
\begin{enumerate}
\item $\boldsymbol{\theta }$ consists of the entries in
the conditional probability tables of the BN; 
\item $d^{0}$ is the decision not to intervene but simply observe and $\mathbb{D}$ consists of decisions $d(y_{k})$ of setting a component $Y_{k}$ to one of its particular levels 
$y_{k}$; 
\item a single panel is responsible for the whole of a particular
conditional probability table of each $Y_{i}$, $i\in[m]$, given a particular
configuration of its parents.
\end{enumerate}
\end{lemma}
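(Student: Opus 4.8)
The plan is to exhibit, for every panel $G_i$ and every policy $d\in\mathbb{D}$, an explicit stochastic map — \emph{known to $G_i$ alone} — carrying the baseline parameter belief $\Pi_i^\theta(A_i,d^0)$ to $\Pi_i^\theta(A_i,d)$, which is exactly what the definition of a $(\mathbb{D},d^{0})$-determined IDSS demands. The whole argument rests on the \emph{autonomy of mechanisms} that characterises a causal BN: intervening to set $Y_k=y_k$ leaves the conditional probability table of every other vertex intact and overwrites only the table of $Y_k$ with a point mass at $y_k$ (the manipulation, or truncated-factorisation, property). First I would record that, by condition (2), $\mathbb{D}=\{d^0\}\cup\{d(y_k):k\in[m]\}$, and that by conditions (1) and (3) the parameter vector $\boldsymbol{\theta}_i$ owned by $G_i$ is precisely the full conditional probability table of $Y_i$ given its parents, with no entry shared with any other panel.

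Next I would observe that the baseline belief $\Pi_i^\theta(A_i,d^0)$ is in fact insensitive to the particular conditioning event $A_i$: the $A_i$ are values of the parents of $Y_i$, i.e.\ events on $\boldsymbol{Y}$, whereas $\Pi_i^\theta$ is a distribution over the parameters $\boldsymbol{\theta}_i$, and in the observational prior these are a priori unrelated, so a single representative $A_i$ carries the whole prior over $\boldsymbol{\theta}_i$. This is what legitimises the ``specific $A_i$'' in the definition. The construction of the map then splits into two cases according to whether $d$ intervenes on $G_i$'s own vertex.

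For $d=d^0$, or $d=d(y_k)$ with $k\neq i$, autonomy of mechanisms gives that the table of $Y_i$ is the same object under $d$ as under $d^0$; hence $\Pi_i^\theta(A_i,d)=\Pi_i^\theta(A_i,d^0)$ and the required map is the identity, trivially known to $G_i$. For $d=d(y_i)$, the table of $Y_i$ is replaced by the degenerate law placing unit mass on $Y_i=y_i$ irrespective of the parents; this is a fixed distribution that $G_i$ can write down from the policy alone, so the required map is the constant (hence degenerate stochastic) map that ignores $\Pi_i^\theta(A_i,d^0)$ altogether. In every case the map is a stochastic function of $\Pi_i^\theta(A_i,d^0)$ known to $G_i$; assembling over all $d\in\mathbb{D}$ and all $i\in[m]$ yields that the collection $\Pi_i^\theta$ is so determined, which is the assertion.

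The one genuine step, rather than bookkeeping, is justifying the parameter-level invariance together with the ``known to $G_i$'' clause. I would make the invariance precise by writing the joint density under $d(y_k)$ as the truncated factorisation in which the $Y_k$-factor is deleted and $Y_k$ is fixed at $y_k$, and reading off that every surviving factor — in particular the $Y_i$-factor for $i\neq k$ — retains its pre-intervention parametrisation; this is the mechanism-level content of the causal semantics assumed in $\mathbb{S}$. Condition (3) is what makes ``known to $G_i$'' defensible: because $G_i$ owns an entire vertex table and shares no parameter with another panel, the reconstruction never requires marginalising over, or drawing information from, a parameter block belonging to someone else, so $G_i$ can compute the map unaided from the graph and the $\DO(\cdot)$-semantics alone. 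I would stress that the upstream/downstream distinctions in the causal semantics govern the predictive summaries $\Pi_i^y$ and the scores $\bar{U}(d)$, not the parameter beliefs $\Pi_i^\theta$, and so play no role here; it is precisely because $\Pi_i^\theta$ responds to an intervention only through the trivial invariance-or-degeneracy dichotomy above that the $(\mathbb{D},d^{0})$-determined property holds.
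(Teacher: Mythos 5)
Your proposal is correct and follows essentially the same route as the paper's own proof: both derive the result directly from Pearl's causal-BN semantics (the truncated factorisation), reading off that an intervention $d(y_{k})$ makes the table of $Y_{k}$ degenerate at $y_{k}$ while every other panel's parameter beliefs are carried over from $d^{0}$ by a map known to that panel. The only cosmetic difference is that the paper records as a separate effect that tables having $Y_{k}$ as a conditioning variable are replaced by their slices at $Y_{k}=y_{k}$, whereas you subsume the children of $Y_{k}$ under the general mechanism-invariance (identity-map) case; either bookkeeping yields the $\left(\mathbb{D},d^{0}\right)$-determined property.
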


\begin{proof}
This derives directly from the definition in \citet{Pearl2008} of a causal BN.
Here we have three effects:
\begin{enumerate}
\item the impact of setting a variable $Y_{k}$ to a value $y_k$ will first have the effect of making the probability of the event $Y_k=y_k$ equal to one, conditional on any configuration of its parents. 
\item all conditional probability tables in the BN having $Y_k$ as a conditioning variable are replaced by the ones obtained by conditioning on the event $Y_k=y_k$.
\item the conditional probability tables of all other components of $Y$ are left unchanged. 
\end{enumerate}
This means, in particular that, as demanded by our definition, $\Pi _{i}^{\theta }(A_i,d(y_{k}))$ can be calculated as a simple function of $\Pi _{i}^{\theta }(A,d^{0})$. So this causal BN is $\left( \mathbb{D},d^{0}\right)$-determined.
\end{proof}

It also can be easily checked that, for example, causal hypotheses for other frameworks, such as chain event graphs \citep{Smith2008} and multiregression dynamic models (MDMs) \citep{Queen1993} provide a $\left(\mathbb{D},d^{0}\right) $-determined IDSS where $
\left( \mathbb{D},d^{0}\right) $ are analogously defined. Furthermore, at least in an adapted form, these causal hypotheses often relate to ones the SB would want to make within an IDSS. For example, in the scenario of exposure of cattle to disease, various culling regimes may be proposed to limit the exposure of susceptible cattle and help ameliorate the spread of the disease. Suppose
expert panel judgement has been based on what was observed to happen in a
parallel epidemic across farms when no controls were in place. If there were
common agreement that exposure, appropriately measured, really did `cause'
future infection, then a substantive, but plausible, hypothesis would be that the effects on the spread of the infection, if exposure were to be controlled by culling, could be identified with the effects when the \emph{effects} of this culling regime had occurred naturally. Probability predictions associated with this sort of control could then be inherited from the results of the observational study on the speed of disease spread when no control is exerted. The efficacy of various possible culling regimes could then be scored even if these regimes had never previously been enacted. Indeed this type of extrapolation is so common that the hypotheses on which it is based sometimes go unnoticed. Note that one particular consequence of this assumption is that different culling regimes giving rise to the same exposure profile, all other factors being equal, would be equally scored.

Within our context, it is helpful to extend the usual causal assumptions for three reasons. First, we may not need to require that the conditions hold for all possible levels of components of 
$\boldsymbol{Y}$,  but only a subset of these levels, broadly those that might improve, in some sense, the outcomes within the unfolding crisis. In practice only a very small proportion of the possible types of intervention considered in a casual model are usually entertained. It is, therefore, useful to acknowledge this less stringent assumption. Second, we might want the flexibility not to map the effects of enacting a control, but a rather more complex decision. This then embellishes $\mathbb{D}$. Last, the natural comparator $d^{0}$ for predicting what might happen may not be doing nothing but rather following routine procedures or past protocols. For all these reasons we have found very useful to generalise these definitions as above.

\subsubsection{Causal hypotheses and likelihood separation}

A second type of causal assumption is a useful addition to the CK class
because it enables panels to accommodate experimental as well as
observational data. Our motivation stems from work by \citet{Cooper1999}. They developed collections of additional assumptions which enabled formal learning of the parameters of discrete BNs when data was not only observational but could also come from designed experiments. They noted that if the BN was causal, in the sense given above where $\mathbb{D}$ contained the controls imposed on explanatory variables commonly used in setting up experiments, then experimental data could be introduced in a simple way. This technology has recently since been transferred to other domains \citep[e.g.][]{Freeman2011, Cowell2014}.

Different panels in most IDSSs will want to accommodate not only observational and survey data but also experimental data. Here covariates are controlled and set to certain values and the subsequent effect on a response variable observed.  So, to be able to accommodate relevant scientific evidence, most operational IDSSs will need to assume causal hypotheses concerning controlled experiments overseen by a relevant panel as this applies to an unfolding crisis. Interestingly, it has been shown \citep[see][]{Daneshkhah2004} at least within the context of BNs that the panel independence assumption necessary to ensure distributivity of an IDSS is intimately linked with, and plausible only when, certain causal hypotheses can be entertained. These causal hypotheses concern not only experiments \emph{within} a module used by a panel but also \emph{across the interface} of the modules. Since the first class of hypotheses is typically a function of a single decision support system in this paper we will concentrate mainly on the second case.

\begin{definition}
Call an IDSS  $\boldsymbol{e}^{t}-$\emph{panel compatible} for a collection of experiments $\boldsymbol{e}^{t}\boldsymbol{=}\left(e_{i}^{t}\right)_{i\in[p]} $ leading to a data set $\boldsymbol{x}^{t}$ if the likelihood associated with $\bm{e}^t$ can be written as 
\begin{equation*}
l^{e}(\boldsymbol{\theta }\ |\ \boldsymbol{t}(\boldsymbol{x}^{t}))=\prod\limits_{i\in[m]}l_{i}^{e}(\boldsymbol{\theta }_{i}\ |\ \boldsymbol{t}_{i}(\boldsymbol{x}^{t}))
\end{equation*}
where $l_{i}^{e}(\boldsymbol{\theta }_{i}\ |\ \boldsymbol{t}_{i}(\boldsymbol{x}^{t}))$ is a function only of the parameters $\boldsymbol{\theta }_{i}$ overseen by a single panel $G_{i}$, $i\in[m]$.
\end{definition}

By far the most common such collection of experiments is one composed of collections of independent experiments that can be partitioned so that each experiment is informative only about the parameters overseen by each particular panel. Again an important special case when this separation will be automatic is when the overarching qualitative structure is a BN and this BN is causal under experimental manipulations. 

\begin{lemma}
Suppose an IDSS is $\left( \mathbb{D},d^{0}\right)$-determined where $d^{0}\in \mathbb{D}$ consists of no intervention on the system. Then the IDSS is $\boldsymbol{e}$-panel compatible whenever $\bm{e}^t$ is made up of components $e_k^t$ consisting of independent, double blind, randomised, designed experiments where the response random vector $\boldsymbol{Y}_{i}$ denotes the observed number of units  in a causal BN when its parents take their particular possible vector of  configuration of values, $i\in[m]$, $k\in[p]$.
\end{lemma}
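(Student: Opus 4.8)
The plan is to reduce the claim to the interventional likelihood factorization for causal Bayesian networks established by \citet{Cooper1999}, exploiting the causal semantics already invoked in the proof of the preceding lemma. Under the stated hypotheses $\boldsymbol{\theta}$ concatenates the rows of the conditional probability tables, and panel $G_i$ owns the parameters $\boldsymbol{\theta}_i$ of the table of $Y_i$ given its parents. Because the IDSS is $\left(\mathbb{D},d^0\right)$-determined with $d^0$ the null intervention, each control in $\mathbb{D}$ acts on the network exactly as a $\DO$-operation: fixing a configuration of the parents of $Y_i$ replaces the tables feeding those parents while leaving the table of $Y_i$ given its parents — and hence $\boldsymbol{\theta}_i$ — intact.

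First I would analyse a single experiment $e_k^t$ in isolation. Randomised allocation guarantees that the manipulated parents of the response $Y_i$ are rendered independent of any unobserved common causes, so that $e_k^t$ realises a genuine intervention rather than a conditioning; the double-blind protocol guarantees that no side information correlated with the remaining parameters can leak into the recorded outcome. Under these two conditions the counts of units falling in each level of $Y_i$, for a fixed manipulated parent configuration, follow a multinomial whose cell probabilities are precisely one row of the table of $Y_i$. The manipulated parents, being fixed by design, contribute no likelihood factor, and every node other than $Y_i$ is left unrecorded by $e_k^t$ and so marginalised away without introducing any dependence on parameters outside $\boldsymbol{\theta}_i$. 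Hence the contribution of $e_k^t$ satisfies
\begin{equation*}
l^{e_k}(\boldsymbol{\theta}\ |\ \boldsymbol{t}_k(\boldsymbol{x}^t)) = l^{e_k}(\boldsymbol{\theta}_{i(k)}\ |\ \boldsymbol{t}_k(\boldsymbol{x}^t)),
\end{equation*}
a function of $\boldsymbol{\theta}$ only through the single panel vector $\boldsymbol{\theta}_{i(k)}$, where $i(k)$ indexes the response interrogated by $e_k^t$.

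Next I would assemble the full likelihood. Because the experiments $e_1^t,\dots,e_p^t$ are mutually independent, the joint likelihood is the product $\prod_{k\in[p]} l^{e_k}(\boldsymbol{\theta}_{i(k)}\ |\ \boldsymbol{t}_k(\boldsymbol{x}^t))$. Regrouping these $p$ factors according to the response they interrogate — collecting all $k$ with $i(k)=i$ and writing $\boldsymbol{t}_i(\boldsymbol{x}^t)$ for the aggregate of their sufficient statistics — yields
\begin{equation*}
l^{e}(\boldsymbol{\theta}\ |\ \boldsymbol{t}(\boldsymbol{x}^t)) = \prod_{i\in[m]}\Bigl(\prod_{k:\,i(k)=i} l^{e_k}(\boldsymbol{\theta}_i\ |\ \boldsymbol{t}_k(\boldsymbol{x}^t))\Bigr) = \prod_{i\in[m]} l_i^{e}(\boldsymbol{\theta}_i\ |\ \boldsymbol{t}_i(\boldsymbol{x}^t)),
\end{equation*}
each $l_i^e$ depending on $\boldsymbol{\theta}$ only through $\boldsymbol{\theta}_i$, with any panel interrogated by no experiment contributing a constant factor. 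This is exactly the factorization defining $\boldsymbol{e}$-panel compatibility.

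The main obstacle I anticipate lies wholly in the single-experiment step: turning the informal design descriptors into the formal assertion that each experiment is equivalent to a $\DO$-intervention on the parents of its response. Randomisation must be argued to break precisely those confounding paths that would otherwise make the outcome depend on parameters owned by other panels, and the double-blind condition must be shown to exclude the subtler contamination — of the kind illustrated in Examples \ref{Ex:2_5} and \ref{ex2.6} — whereby a single manipulated sample secretly informs two parameter blocks at once and so destroys separability. Once this equivalence is secured the causal-network likelihood of \citet{Cooper1999} applies verbatim and the regrouping above is routine.
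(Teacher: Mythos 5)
Your proposal is correct and follows essentially the same route as the paper's own proof: both rest on the causal BN property that the likelihood of data from a manipulated (do-type) experiment on a node's parents coincides with the corresponding observational likelihood, and hence is a function of the single panel parameter block $\boldsymbol{\theta}_{i}$ alone, with independence across experiments then yielding the product factorization over panels. The paper is terser but makes one step more explicit than you do — that because this identification holds under the idle decision $d^{0}$ and the IDSS is $\left(\mathbb{D},d^{0}\right)$-determined, the separation transfers to the parameters under every $d\in\mathbb{D}$ — so you may wish to state that extension directly rather than leave it implicit in your opening framing.
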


\begin{proof}
One of the properties of a causal BN is that the probability of a random sample of a set of observations of one of its nodes when its parents are manipulated to take a certain value can be equated with the probability of observing this sample when it is idle - i.e. when there is no intervention. In particular this means that the $\boldsymbol{\theta }_{i}$ appearing in
the manipulated experiment is equal to $G_{i}$'s corresponding parameters of interest in the observational setting where $d^{0}\in \mathbb{D}$, $i\in[m]$. But because this is so for the idle control $d^{0}$, since the IDSS is $\left( \mathbb{D},d^{0}\right) $-determined, it is also true for all $d\in \mathbb{D}$.
\end{proof}

Note that as in the remarks of the last section, these are not the only class of experiments to have this property but simply an important special case. The directly analogous definitions of causal hypotheses as they apply to different overarching graphical structures like chain event graphs or MDMs also admit such a factorisation.

\section{Examples of sound and distributive frameworks}
\label{sec:frameworks}
We saw in Section \ref{sec:coherence} that, provided a system is such that certain qualitative properties existed over the parameters of a composite model and the likelihood of information separated over parameters in an appropriate way, then the composite system should remain distributed with all the implementation, interpretative and computational advantages that confers. But how common are such systems? The answer is that whilst many systems violate the conditions needed, many others satisfy them. So by choosing panels appropriately and by demanding that only certain types of unambiguously interpretable data are entered into the IDSS, it is often possible to build such distributed systems. In this section we present some well known settings where this is possible and one where it is not.

\subsection{Stochastic and value independence}
We begin with a trivial system made up of independent components and linear utility.

\begin{definition}
A centre's reward vector  $\bm{Y}=(\bm{Y}_i)_{i\in[m]}$ is said to have \textbf{value independent attributes}, and the utility has a \textbf{linear form}, if $U(d,\bm{y})$ can be written as
\begin{equation}
U(d, \bm{y})=\sum_{i\in[m]}k_{i}(d)U_{i}(d,\bm{y}_{i} )  \label{value ind},
\end{equation}
where $\sum_{i\in[m]}k_{i}(d)=1$, $k_i(d)\in(0,1)$ are called \emph{criterion weights} and $\sup_{d,\bm{y}_{i}}\;U_{i}(\bm{y}_{i},d )=1$, $i\in [m]$, $d\in\mathbb{D}$.
\end{definition}

Suppose the CK-class contains the hypothesis that all potential users will have value independent attributes. Assume further that the vectors  $\bm{Y}_{i}$  are mutually independent for any given $d\in \mathbb{D}$ so that $\independent_{i\in[m]}\bm{Y}_{i}\;|\;\bm{\theta }_{i},d$, where the distribution of $\bm{Y}_{i}$ is a known deterministic function of $\bm{\theta }_{i}$ and $d\in\mathbb{D}$, $i\in[m].$ For each $d\in \mathbb{D}$, let the expected utility $\bar{U}_{i}(\bm{\theta }_{i}, d)$ denote the expectation of $U_{i}(\bm{y}_i,d )$ over $\bm{Y}_{i}\;|\;\bm{\theta }_{i},d$, and $\bar{U}(d)$ the expectation over $\bm{Y}_i\;|\;d$, $i\in[m]$. Then 
\begin{equation*}
\bar{U}(d)=\sum_{i\in[m]}k_{i}(d)\bar{U}_{i}(d),  
\end{equation*}
where
\begin{equation*}
\bar{U}_{i}(d)=\int_{\bm{\theta }_{i}\in \Theta _{i}(d)}\bar{U}_{i}(\bm{\theta}_i,d)\pi _{i}(\bm{\theta }_{i}\;|\;d)\dr\bm{\theta }_{i},
\end{equation*}
and $\pi _{i}(\bm{\theta }_{i}\;|\;d)$ is SB's /$G_{i}$'s prior on $\bm{\theta }_{i}\;|\;d$, $d\in\mathbb{D}$. This system is clearly \emph{a priori} distributive. So the SB can devolve her calculations of the expected utility of each $d\in \mathbb{D}$ to the relevant panels. She can then use these expected utility evaluations with a user's input $\bm{k}(d)=(k_i(d))_{i\in[m]}$ to evaluate $\bar{U}(d)$ and hence discover an optimal policy \citep[as in][]{Dodd2012,Smith2012}.

Note that in this scenario no panel needs to contemplate any issue other than those concerning its own domain of expertise. The IDSS can simply re-evaluate scores if changing circumstances demand that the user's priorities, reflected in her choice of criterion weights, need to be modified, using the original delivered information.  However, as soon as the utility is not linear, or there is a dependence induced across the components $\bm{Y}_i$,  $i\in[m]$, then this need no longer to be the case \citep{Leonelli2015}. 

\subsection{Staged trees}
\label{sec:tree} 
Perhaps the simplest overarching structure within which to express forms of dependence is the event tree $\mathcal{T}(d)$, for each $d\in \mathbb{D}$. Suppose  that panels can agree the topology of the underlying event tree  $\mathcal{T}(d)$ for each possible decision describing the set of corresponding unfolding events.  Suppose further that all agree that panel $G_{i}$, $i\in[m]$, should deliver the edge probability vectors $\bm{\theta }_{ij}$ associated with edges emanating from each non-leaf vertex $v_{ij}$, $j\in[m_i]$. Let $\bm{\theta }_{ij}\triangleq \left( \theta _{ijk}\right)_{k\in[m_{ij}]} $, $\bm{\theta }_{i}\triangleq \left( \bm{\theta }_{ij}\right)_{j\in[m_i]} $ and $\bm{\theta }\triangleq \left( \bm{\theta }_{i}\right)_{i\in[m]} $. Then, for a random sample $\bm{x}^t$,
\begin{align*}
l(\bm{\theta }\;|\;\bm{x}^{t})&=\prod_{i\in[m]}l_{i}(\bm{\theta }_{i}\;|\;\bm{t}_{i}(\bm{x}^{t})),\\
\shortintertext{where} 
l_{i}(\bm{\theta }_{i}\;|\;\bm{t}_{i}(\bm{x}^{t}))&=\prod_{j\in[m_i]}l_{ij}(\bm{\theta }_{ij}\;|\;\bm{t}_{i}(\bm{x}^{t})),\\
\shortintertext{and} 
l_{ij}(\bm{\theta }_{i}\;|\;\bm{t}_{i}(\bm{x}^{t}))&=\prod_{k\in[m_{ij}]}\theta _{ijk}^{x_{ijk}},
\end{align*}
where $\sum_{k\in[m_{ij}]}\theta _{ijk}=1$ and $x_{ikj}$ is the number of units in the sample reaching vertex $v_{ij}$ and then proceeding down the $k^{th}$ edge, $i\in[m]$, $j\in[m_i]$ and  $k\in[m_{ij}]$. Here $\bm{t}_i=\{x_{ijk}:j\in[m_i], k\in[m_{ij}]\}$. Clearly, in many cases, including complete datasets, this likelihood separates over the panels.  It can also be shown that if the event tree is causal, as defined in \citet{Cowell2014}, \citet{Shafer1996} and \citet{Thwaites2010}, then the likelihood remains separable. 

An example of this sort of structure comes from level activity modelling in forensic science \citep{Cowell2011, Dawid2007}. When collecting evidence concerning a particular criminal prosecution, inference can sometimes be expressed in terms of such a tree where the different panels will be the jury members, the forensic experts, and the court recorders: see \citet{Smith2010} for an example in this context. Then under panel independence the whole system will enjoy the property of being distributed. Note that in this sort of application, when the prosecution and defence case have a different narrative about an activity, e.g. the defence case asserts the suspect was not at the scene of the crime whilst the prosecution case asserts that the suspect went through a sequence of actions, then the corresponding event tree is very asymmetric.  In these sorts of examples, building the IDSS on a BN would be inefficient and contrived. This is one reason why our theoretical development is not embedded in a BN framework, but instead develops a methodology which extends to BNs as a special case.

\subsection{Bayesian networks and chain graphs}
Currently, the most well developed and established qualitative structure is the BN. In our context the different panels would be asked to agree to a particular directed acyclic graph (DAG) of the BN. The responsibility for the delivery of different conditional probability tables (or more generally rows in these tables) would be partitioned out across the different panels. In this setting, global independence and local independence are sufficient for panel independence.  In practice, these assumptions are almost always made in the more usual one-panel setting \citep{Spiegelhalter1990}. These make probability elicitations and computations much more straightforward than they would otherwise be. The separation of the likelihood for both discrete and continuous BNs under complete or ancestral sampling has long been recognised \citep[e.g.][]{Lauritzen1996,Smith2010}, as has the closure to designed experiments \citep[e.g.][]{Cooper1999}.

Let $\Gr$ be a BN with vertex set $\{Y_i:i\in[n]\}$ and $\{B_i: i\in[m]\}$ be a partition of $[n]$. Suppose panel $G_i$ oversees $\bm{Y}_{B_i}=(Y_j)_{j\in B_i}$, $i\in[m]$. In this setting panel independence implies 
\[
\begin{array}{ccc}
\pi(\bm{\theta}\;|\;d)=\prod_{i\in[m]}\pi_i(\bm{\theta}_{B_i}\;|\;d), & \mbox{ and } &
f(\bm{y})=\prod_{i\in[m]}f_i\left(\bm{y}_{B_i}\;|\;\bm{y}_{\Pi_{B_i}}\right),
\end{array}
\]
where $\Pi_{B_i}=\cup_{j\in B_i}\Pi_j$, $\Pi_j$ is the index parent set of $Y_j$, $\bm{\theta}_{B_i}=(\bm{\theta}_j)_{j\in B_i}$, $\bm{\theta}_j$ being the parameter associated to a vertex $Y_j$, and 
\begin{equation*}
f_i\left(\bm{y}_{B_i}\;|\;\bm{y}_{\Pi_{B_i}}\right)=\int_{\bm{\theta}_{B_i}\Theta_{B_i}(d)}f_i(\bm{y}_{B_i}\;|\;\bm{y}_{\Pi_{B_i}},\bm{\theta}_{B_i})\pi_i(\bm{\theta}_{B_i}\;|\;d)\dr \bm{\theta}_{B_i}, \label{eq:bnidss3}
\end{equation*}
where $\Theta_{B_i}(d)$ is the parameter space of panel $G_i$, $d\in\mathbb{D}$.

Furthermore, provided that $\bm{x}^t$ is a complete random sample from the same population as $\bm{Y}$, the posterior density can be written as
\[
\pi(\bm{\theta}\;|\;\bm{x}^t,d)=\prod_{i\in[m]}\pi_i(\bm{\theta}_{B_i}\;|\;\bm{x}^t_{B_i},\bm{x}^t_{\Pi_{B_i}},d).
\]
 For object-oriented BNs which construct a large, overarching BNs through copying parts of its specification \citep{Koller1997} we would need the additional condition that all replicates of a probabilistic `object' were the responsibility of a single panel. But since such replicates are asserting that the parameters of certain aspects of the model are identical this condition would, in practice, normally be satisfied.  For a chain graph \citep{Wermuth1990, Frydenberg1990,Lauritzen2002}, provided that all the probability specifications concerning variables in a given chain component are always the responsibility of a single panel, the assumption of prior panel independence can be made consistently and is natural. Again under complete or ancestral sampling the likelihood also separates over the panels, which was exploited for the combination of expert judgement in \citet{Faria1997}.

Although some common BNs are often coded as discrete or Gaussian and supported by a conjugate analysis, this is not a requirement for their definition via the semi-graphoid axioms (stated on page \pageref{Def:SemiGraphoid}). For example any panel's distribution can be extremely complex, involving many latent variables which are later marginalised out to provide the required, sometimes non-parametric, posterior distributions over many or even an infinity of parameters. Usually then posterior distributions will not have an explicit algebraic form, but constitute large sample approximations from the complex system. In our context it would be extremely cumbersome, even in small scale  problems, for a panel to communicate such information in its entirety to the system. However, in a rather different context of addressing the challenges of inference using massive data sets, \citet{Bissiri2013} have already noted  that only certain moment are needed for the calculations of expected utilities for certain classes of utilities. This property can be exploited when considering an overarching framework of a BN when one component of that system is very large \citep [see below and e.g.][]  {Lauritzen1992, Nilsson2001}. This is also the case for an IDSS. It is interesting to note that, in fact, propagation algorithms for general BNs in terms of lower order moments and general distributional assumptions have been known for a long time. 

\subsection{Decomposable graphs and cliques of panellists}
One class of problems where the sorts of conditions needed for separation can be fierce is when the overarching model is an undirected graph \citep{Lauritzen1996}. From \citet{Dawid1993} we know that a sensible way to build panels would be by letting each oversee a clique of an agreed decomposable undirected graph, since distributed inferences can then be carried out for complete observations. However, now two different panels overseeing cliques adjacent to one another will \emph{both} have responsibility for parameters of the margins of variables lying on the separator. Therefore variational independence does not hold, panel independence is broken and  there is a danger that the system is not well defined. Two panels need to agree the same prior over this separator, for example, ensured through imposing into the CK-class that the centre's beliefs are strong hyper-Markov \citep{Dawid1993}. Even then, if the two adjacent panels update their beliefs autonomously with different sets of information, then there is no guarantee that the resulting two posterior distributions of the two different panels will remain hyper-Markov.

One simple solution to this problem, when the graph is decomposable, is to give precedence to one panel's information about a particular separator and ignore all others'. This is equivalent to selecting a BN representation of this decomposable graph where responsibility for a particular separator is delegated entirely to the panel delivering the parent clique probabilities. The other panel is then only responsible for delivering the probability judgements about its clique probabilities conditional on the values of the separator: see related issues associated with meta-analyses of incomplete data \citep[e.g.][]{Massa2010, Jirousek2003}. In these sorts of contexts the standard Bayesian paradigm may well not be ideal. More expressive inference, perhaps using belief functions \citep{Shafer1976}, might be more appropriate to represent the composition of beliefs in such settings. 

\subsection{MDMs and uncoupled dynamic BNs}
BN time series models, such as the general dynamic BN \citep{Koller2001,Murphy2002} or the dynamic matrix-variate graphical model \citep{Carvalho2007}, useful for modelling different components of the system, do not exhibit the required distributive properties, so these do not provide good frameworks for the overarching system. However, the class of MDM models \citep{Queen1993} does \citep[see, for application of this class,][]{Anacleto2013b, Costa2015, Queen1994}. This essentially models a multivariate time series $\{\bm{Y}(t)\}_{t\in\mathbb{Z}_{\geq 1}}=\{Y_1i(t): i\in[n]\}_{t\in\mathbb{Z}_{\geq 1}}$ as a DAG whose vertices are univariate processes and its topology remains fixed through time. The univariate time series are modelled with a regression dynamic linear model \citep{West1997}, where the regressors are the contemporaneous time series. Importantly both the predictive, $f(\bm{y}(t)\;|\;\bm{y}^{t-1})$, $\bm{y}^{t-1}=(\bm{y}(s))_{s\in[t-1]}$, and conditional, $f(\bm{y}(t)\;|\;\bm{\theta}(t),\bm{y}^{t-1})$, distributions of an MDM factorize accordingly to the underlying DAG.  This ensures distributivity.
Specifically, for $i\in[n]$ we assume:
\begin{align}
& Y_i(t)=\bm{\theta}_i(t)\bm{F}_i(t)^\T+\bm{v}_i(t),				& & v_i(t)\sim (0,V_i(t)) ; \label{Eq:MDM1}\\
& \bm{\theta}(t)^\T=G(t)\bm{\theta}(t{-1})^\T+\bm{w}(t),		& &\bm{w}(t)\sim (\bm{0},W(t));\label{Eq:MDM2}\\
& \bm{\theta}(1)\;|\;I^0 \sim (\bm{m}(1),C(1)). 		 				&	& \label{Eq:MDM3}
\end{align}
Here $\bm{F}_t(i)\in\mathbb{R}^{s_i}$, $s_i\in\mathbb{Z}_{\geq 1}$, is a vector of regressors, known functions of $\bm{Y}^t_{\Pi_i}$ and $\bm{Y}^{t-1}_i$. The  parameter vector $\bm{\theta}(t)=(\bm{\theta}_i(t))_{i\in[n]}\in\mathbb{R}^s$ has elements $\bm{\theta}_i(t)\in\mathbb{R}^{s_i}$, where $\sum_{i\in[n]}s_i=s$. The observation error $v_i(t)$, that can be either assumed known or unknown, is univariate and $V_i(t)\in\mathbb{R}_{>0}$ is the observational variance, often assumed constant through time. The  known matrices  $G(t)=\textnormal{blockdiag}(G_1(t),\dots,G_n(t))$ and $W(t)=\textnormal{blockdiag}(W_1(t),\dots, W_n(t))$ have dimension $s\times s$, where $G_i(t)$ and $W_i(t)$ are  the $s_i\times s_i$ state evolution matrix and state evolution covariance matrix for $\bm{\theta}_i(t)$, respectively, $i\in[n]$. The $s$-dimensional vector $\bm{w}(t)=(\bm{w}_i(t))_{i\in[n]}$ has elements $\bm{w}_i(t)$ of dimension $s_i$ called system error vector. The errors $v_1(t),\dots, v_n(t),\bm{w}_1(t),\dots, \bm{w}_n(t)$ are assumed independent of each other and through time. The vector $\bm{m}(1)\in\mathbb{R}^s$ and the $s\times s$  covariance matrix $C(1)=\textnormal{blockdiag}(C_1(1),\dots,C_n(1))$ are the first two moments of the distribution of $\bm{\theta}(1)\;|\; I^0$, where $I^0$ represents the initial information available.


To illustrate the features of the MDM model, consider the DAG of a na\"{i}ve food supply model in Figure \ref{fig:foodMDM} and suppose panel $G_i$ oversees the time series $\{Y_i(t)\}_{y\in\mathbb{Z}_{\geq 1}}$ for each decision $d\in\mathbb{D}$. It can be shown that the joint distribution of $(\bm{Y}_t,\bm{\theta}_t)$ given the past factorizes as
\begin{equation*}
f(\bm{y}(t),\bm{\theta}(t)\;|\; I^{t-1},d)=\prod_{i\in[4]}f_i(y_i(t)\;|\;\bm{y}_{\Pi_i}(t),\bm{\theta}_i(t), I^{t-1})\pi_i(\bm{\theta}_i(t)\;|\;I^{t-1},d).
\end{equation*} 
So note the critical feature that the panel parameters are independent of each other at every time point conditional on the past. As mentioned, the predictive distribution also factorizes accordingly as

 \begin{equation}
 \label{eq:mdmmarg}
 f(\bm{y}_t\;|\;\bm{y}^{t-1})=\prod_{i\in[4]} g_{t,i}(\bm{y}^t_i,\bm{y}^t_{\Pi_i},\bm{\theta}_i(t),d),
 \end{equation}
where
\begin{equation*}
g_{t,i}=\int_{\Theta_i(d)}f_i(y_i(t)\;|\;\bm{y}^t_{\Pi_i},\bm{y}^{t-1}_i,\bm{\theta}_i(t))\pi_i(\bm{\theta}_i(t)\;|\;\bm{y}^{t-1}_{\Pi_i},\bm{y}_i^{t-1},d)\textnormal{d} \bm{\theta}_i(t).
\end{equation*}

Although the predictive joint distribution of $\bm{Y}(t) $ is not Gaussian, indeed its distribution can be  very complex, the joint moments of its predictives are straightforward to calculate and are polynomial functions of various posterior moments that can be autonomously calculated by the different panel members \citep[see][for an explicit derivation of some of these recurrences]{Queen2008}. As discussed in \citet{Leonelli2015}, in this framework, expected utilities are often functions of these moments. Therefore, panels need only to deliver to the system a few summaries, making the calculation of expected utility scores very fast.

Note that for modularity to hold we do not need any distributional assumptions other than panel independence, but simply, as for the BN, that the individual component processes can be partitioned so that exactly one panel is responsible for the development of that process, given its parents. This is because distributivity derives only from the assumed conditional independence structure and not from the distributional assumptions about the parameters or conditional distribution of responses given parents and these parameters. So instead of components being dynamic regression models, they can essentially have an arbitrary structure, provided they give an explicit distribution for the response variable given the parameters and the configuration of the parents. 

\section{Panel summary recursions for a dynamic IDSS}
\label{sec:MDM}
\label{sec:dynamic}

\subsection{An instance of a dynamic CK-class}

\label{StructuralAssumptions}

In \citet{Leonelli2015} and \citet{LeonelliThesis} five general structural properties were discovered which arise as a consequence of the axioms of Section \ref{sec:toy}. We were able to show that these properties ensured that the expected utilities needed by the IDSS to score options could be calculated using various tower rule expansions. Critically, we showed that each term in these expansions, usually a polynomial, could be calculated autonomously by a single panel, once it had communicated with its neighbours, so the required expectations then could be calculated through appropriate message-passing between constituent panels. Many of the classes of models we discuss there satisfy the conditions needed in this paper.  The general forms of the tower rule formulae given in \citet{Leonelli2015} and \citet{LeonelliThesis} needed to identify expected utility maximising decisions, although straightforward, are  very general and rather technical.  So here we restrict ourselves to illustrating how these message-passing algorithms might work for the elementary system given in Figure \ref{fig:foodMDM}.
\begin{figure}
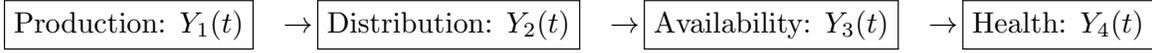

\begin{equation*}
\resizebox{.95\hsize}{!}{$\begin{array}{ccccccc}
\fbox{Production: $Y_1(t)$}  \hspace{-.3cm}& \rightarrow & \hspace{-.3cm}\fbox{Distribution: $Y_2(t)$}\hspace{-.3cm}&\rightarrow
 &\hspace{-.3cm}\fbox{Availability: $Y_3(t)$}\hspace{-.3cm} &\rightarrow &\hspace{-.3cm}\fbox{Health: $Y_4(t)$} 
\end{array}$}
\end{equation*}
\caption{A na{\"i}ve DAG for food security. \label{fig:foodMDM}}
\end{figure}

Thus suppose the space of possible decisions, $\mathbb{D}$, is sufficiently  small for it to be plausible to evaluate the expected utility of every policy separately, one at a time. Without loss, for simplicity we drop the $d \in \mathbb{D}$ index, and present the formulae needed for the expected utility of a fixed decision to be calculated. Let $\bm{Y}^{\T}=(\bm{Y}(1)^{\T},\dots,\bm{Y}(T)^{\T})$ be a random vector observed over a finite time horizon $T$ where $\bm{Y}(t)$ is the vector at time $t\in[T]$. Let $\bm{Y}(t)^{\T}=(\bm{Y}_1^{\T},\dots, \bm{Y}_4^{\T})$, where $\bm{Y}_i(t)$ is under the responsibility of panel $G_i$, and $\bm{Y}_i(t)^{\T}=(Y_i^1(t),\dots, Y_i^r(t))$ be a vector of variables observed over $r$ different locations in space, $i\in[4]$, $t\in[T]$. Suppose that the family $\mathbb{U}$ in the CK class demands that the delivered utility function is linear over time, space and panel index. It follows that  $U$ must take the form:
\[
U=\sum_{i\in[4]}\sum_{t\in[T]}\sum_{l\in[r]}k_{til}U_{til}(y_{i}^l(t)),
\]
Assume further that the marginal utilities  $U_{til}=-\gamma_i(t)y_i^l(t)^2$ where $\gamma_i(t) \in \mathbb{R}$. Here the criterion weights $k_{til}$ are fixed constants, having been previously elicited.  

Assume panel $G_1$ uses any dynamic probabilistic model of arbitrary complexity, able to deliver as output the first two moments of the distribution of $\bm{Y}_1(t)$, $t\in[T]$. Let $\bm{a}_1(t)$ and let $C_1(t)$ denote the mean and covariance matrix, respectively, of $\bm{Y}_1(t)$, where
\[ C_1(t)=\left[ \begin{array}{c}
c_1^{p,s}(t) \\
\end{array} \right]_{p,s }.\] 
Also let $a_1^l(t)$ and $c_1^l(t)$ denote the mean and the variance of $Y_1^l(t)$ and $\bm{c}_1(t)=(c_1^1(t),\dots,c_1^r(t))$.

The second panel, $G_2$, uses as its model of $\bm{Y}_2(t)$ $r$ different MDMs. Equation (\ref{Eq:MDM1}) then becomes: 
\begin{align*}
&Y_2^l(t)=\bm{Y}_1(t)^\T\bm{\theta}_2^l(t) +v_2^l(t),\;\;\;\;\;\;\;\;v_2^l(t)\sim (0,V^l_2(t)),
\end{align*}
where $\bm{\theta}_2^l(t)$ is an $r$-dimensional vector of unknown parameters and $\E(V_2^l(t))=b_2^l(t)$. The system equations (\ref{Eq:MDM2}) therefore become:
\[
\bm{\theta}_2^l(t)=G_{2}^l(t)\bm{\theta}_{2}^l(t{-1})+\bm{w}_{2}^l(t),\;\;\;\;\;\;\;\;\bm{w}_{2}^l(t)\sim (\bm{0},W_{2}^l(t)),
\]
where $G_{2}^l(t)$ is a $r\times r$ matrix having known entries and   
\[W_{2}^l(t)= \left[ \begin{array}{c}
w_{2,l}^{i,j} \\
\end{array} \right]_{i,j }\] 
 is a matrix. Here the panel has ascertained $\bm{\theta}^l_2(1)$ has a prior distribution with mean $\bm{a}_2^{l}$ and covariance $C_{2}^l$, where 
\[ C_{2}^l=\left[ \begin{array}{c}
c_{2,l}^{i,j} \\
\end{array} \right]_{i,j .}\]

Panel $G_3$ also employs a simple linear MDM defined, for $l\in[r]$, by the following observation and system equations 
\[
Y_3^l(t)=\theta_{23}^l(t)Y_2^l(t)+v_3^l(t), \,\,\,\,
\theta_{23}^l(t)=\theta_{23}^l(t{-1})+w_{23}^l(t).
\]
The errors $v_t^l(3)$ and $w_t^l(2,3)$ are assumed by $G_3$ to be mutually independent with mean zero and variance $V_3^l(t)$ and $W_{23}^l(t)$, respectively. These variances are assumed to be unknown, but the panel provides a prior mean $b_3^l(t)$ for $V_3^l(t)$ and $r_{23}^l(t)$ for $W_{23}^l(t)$. Prior means and variances are delivered also for the parameters $\theta_{23}^l(1)$ and denoted as $a_{23}^l$ and $c_{23}^l$.

The final panel, $G_4$, uses an emulator based on a complex deterministic simulator for each region; for our example, the household demography around a given region.  This model might be a simulator of purchasing activities over the space of households, with runs covering a small sample from retail outlets in the region. The inputs of each simulator are $y_3^l(t)$ together with other known constants, whilst its output is $y_4^l(t)$. The emulator is then formally defined as
\[
Y_4^l(t)=m(Y_3^l(t))+e(Y_3^l(t)),
\]
where $m(Y_3^l(t))=\theta^l_{04}+\theta^l_{34}Y_3^l(t)$ and $e(Y_3^l(t))$ is a zero mean Gaussian process with covariance function $c^l(\cdot,\cdot)=W^l_{4}r^l(Y_3^l(t)-Y_3^l(t)^{'})$, where $r^l$ is a stationary correlation function such that $r^l(0)=1$ agreed by panel $G_4$ using their expert judgements \citep[see e.g.][]{Kennedy2001} for this emulator. From the emulator $G_4$ is now able to calculate, in particular, the moments of his parameters which will be needed by the SB as
\[
\begin{array}{lllll}
\mathbb{E}(\theta^l_{04})=a^l_{04},&\mathbb{E}(\theta^l_{34})=a^l_{34},&
\mathbb{V}(\theta^l_{04})=c^l_{04},&\mathbb{V}(\theta^l_{34})=c^l_{34},&\mathbb{E}(W^l_{4})=r^l_{4}.
\end{array}
\]

The algorithm, described in \citet{Leonelli2015}, enables us to customise a message-passing algorithm for the overarching structure defined.  This works backwards both through the DAG and over time, from the last panellist at the last time point, and entails the computation of the function $\tilde{u}_{t,j}$, which includes all the terms that are a function of $\bm{Y}_j(t)$ in the backward routine, and its expectation $\bar{u}_{t,j}$.  We will illustrate the algorithm over just two time steps, so start from panel $G_4$ which oversees $\bm{Y}_4(2)$. Here we set $\tilde{u}_{2,4}=U$, and let $\dot{u}_{2,4}=U-\sum_{l\in[r]}k_{24l}U_{24l}$. Panel $G_4$ first computes $\bar{u}_{2,4}$, using the formula
\begin{align*}
\bar{u}_{2,4}&=\dot{u}_{2,4}-\sum_{l\in[r]}k_{24l}\gamma_4^l(2)\mathbb{E}(Y_4^l(2)^2)=\dot{u}_{2,4}-\sum_{l\in[r]}k_{24l}\gamma_4^l(2)(\mathbb{E}(Y_4^l(2))^2+\mathbb{V}(Y_4^l(2)))\\
&=\dot{u}_{2,4}-\sum_{l\in[r]}\gamma_4^l(2)k_{24l}\left[\tau^l_{4}+2a^l_{04}a^l_{34}\mathbb{E}(Y_3^l(2))+\tau^l_{34}\mathbb{E}(Y_3^l(2)^2)\right],
\end{align*}
where $\tau^l_{04}={a^l_{04}}^2+r^l_{4}+c^l_{04}$ and $\tau^l_{34}={a^l_{34}}^2+c^l_{34}$.

This function is then  communicated to $G_3$. Because of the topology of the DAG in Figure \ref{fig:foodMDM}, under the notation in \citet{Leonelli2015}, $\tilde{u}_{2,3}=\bar{u}_{2,4}+\sum_{l\in[r]}U_{23l}$. So writing $h_{2,3}^l=-2\gamma_4^l(2)k_{24l}a_{04}^la_{34}^l$ and 
\[
\dot{u}_{2,3}=\sum_{i\in[4]}\sum_{l\in[r]}k_{1il}U_{1il}+k_{21l}U_{21l}+k_{22l}U_{22l}-\gamma_4^l(2)k_{24l}\tau_{4}^l,
\]  
it follows that $\bar{u}_{2,3}$ is then equal to 
\begin{align*}
\bar{u}_{2,3}&=\dot{u}_{2,3}-\sum_{l\in[r]}(\gamma_4^l(2)k_{24l}\tau_{34}^l+k_{23l}\gamma_{3}^l(2))\E(Y_{3}^l(2)^2)+h_{2,3}^l\E(Y_3^l(2)),
\end{align*}
where $\E(Y_3^l(2))=a_{23}^l\E(Y_2^l(2))$ and $\E(Y_3^l(2)^2)=({a^l_{23}}^2+r_{23}^l(2)+c_{23}^l)\E(Y_2^l(2)^2)+b_3^l(2).$  Noting that $\tilde{u}_{2,2}=\bar{u}_{2,3}+\sum_{l\in[r]}U_{22l}$, rearranging, we obtain the equation
\begin{equation}
\label{eq:kmn}
\tilde{u}_{2,2}=\dot{u}_{2,2}-\sum_{l\in[r]}\left[[(\gamma_4^l(2)k_{24l}\tau_{34}^l+k_{23l}\gamma_{3}^l(2))({a_{23}^l}^2+r_{23}^l(2)+c_{23}^l)+k_{22l}\gamma_{2}^l(2)]\E(Y_2^l(2)^2)+h_{2,3}^la_{23}^l\E(Y_2^l(2))\right],
\end{equation}
where 
\begin{equation}
\label{eq:kmnp}
\dot{u}_{2,2}=\sum_{i\in[4]}\sum_{l\in[r]}k_{1il}U_{1il}+k_{21l}U_{21l}-\gamma_4^l(2)k_{24l}\tau_{4}^l-(\gamma_4^l(2)k_{24l}\tau_{34}^l+k_{23l}\gamma_{3}^l(2))b_3^l(2).
\end{equation}

At this stage panel $G_2$ performs a marginalisation step computing 
\begin{equation}
\E(Y_2^l(2))=\E(\bm{Y}_1(2)^\T)G_2^l(2)\bm{a}_2^l,\,\,\mbox{and}\,\,
\E(Y_2^l(2)^2)=\E(Y_2^l(2))^2+\V(Y_2^l(2)),
\end{equation}
where
\begin{equation}
\E(Y_2^l(2))^2={\bm{a}_2^l}^\T G_2^l(2)^\T \E(\bm{Y}_1(2))\E(\bm{Y}_1(2)^\T)G_2^l(2){\bm{a}_2^l}
\end{equation} 
and  
\begin{align}
\V(Y_2^l(2))&=\V(Y_1(2)^\T\bm{\theta}_{2}^l(2))+\E(V_2^l(2))= \V(\bm{Y}_1(t)^\T G_2^l(2)\bm{\theta}_2^l(1))+\E(\bm{Y}_1^\T W_2^l(2)\bm{Y}_1(t))+b_2^l(2)\nonumber\\
&=\V(\bm{Y}_1(t)^\T G_2^l(2)\bm{a}_2^l)+\E(\bm{Y}_1(2)^\T G_{2}^l(2)C_2^lG_2^l(2)^\T \bm{Y}_1(t))+\E(\bm{Y}_1^\T W_2^l(2)\bm{Y}_1(t))+b_2^l(2)\nonumber\\
&= {\bm{a}_2^l}^\T G_2^l(2)^\T \V(\bm{Y}_1(2))G_2^l(2)\bm{a}_2^l+\E(\bm{Y}_1(2)^\T (G_{2}^l(2)C_2^lG_2^l(2)^\T+W_2^l(2)) \bm{Y}_1(2))
+b_2^l(2)\label{eq:kmnfgs}.
\end{align}

Substituting the results of  equations (\ref{eq:kmnp})-(\ref{eq:kmnfgs}) into (\ref{eq:kmn}) panel $G_2$ can now compute  $\bar{u}_{2,2}$. This function is then sent to $G_1$. Recalling that $\E(\bm{Y}_1(2))=\bm{a}_1(2)$ and $\V(\bm{Y}_1(2))=C_1(2)$, we see that 
\[
\E(\bm{Y}_1(2)^\T (G_{2}^l(2)C_2^lG_2^l(2)^\T+W_2^l(2)) \bm{Y}_1(2))= \sum_{\mathclap{i,j\in[r]}}(a^{i,j}+w^{i,j}_{2,l})(a^i_1(2)a_1^j(2)+c_1^{i,j}(2)),
\]  
where $a^{i,j}$ is the entry in position $(i,j)$ of $A=G_{2}^l(2)C_2^lG_2^l(2)^\T$.

By substituting these values into $\bar{u}_{2,2}$, panel $G_1$ can therefore now compute the value $\bar{u}_{2,1}$ it needs to transmit. Proceeding in an analogous way after more  laborious but straightforward substitutions we obtain the score $\bar{U}$ for this policy, namely the polynomial:

\begin{equation*}
\bar{U}(d)=\sum_{t\in[2]}\sum_{l\in[r]}\bar{\gamma}_1(t)\tau_1^l(t)+\bar{u}_2^l(t)(\bar{\gamma}_2^l(t)+\bar{u}_3^l(t)(\bar{\gamma}_3^l(t)+\tau_{34}^l\bar{\gamma}_4^l(t)))+\tau_{mix}^l(t),
\end{equation*}
where $\bar{\gamma}_i^l(t)=k_{til}\gamma_i^l(t)$, $\tau_1^l(t)=a_1^l(t)^2+c_1^l(t)$, $\bar{u}_3^l(1)=c_{23}^l+(a_{23}^l)^2$, $\bar{u}_3^l(2)=\bar{u}_3^l(1)+r_{23}^l(2)$ and 
\begin{eqnarray*}
\tau_{mix}^l(2) &=&b_3^l(2)\bar{\gamma}_3^l(2)+\bar{\gamma}_4^l(2)\tau_4^l+2a_{04}^la_{34}^la_{23}^l\bm{a}_1(2)^\T G_2^l(2)\bm{a}_2^l,\\
\tau_{mix}^l(1) &=&b_3^l(1)\bar{\gamma}_3^l(1)+\bar{\gamma}_4^l(1)\tau_4^l+2a_{04}^la_{34}^la_{23}^l\bm{a}_1(1)^\T \bm{a}_2^l,\\
\bar{u}_2^l(2)  &=&{\bm{a}_2^l}^\T G_2^l(2)^\T(C_1(2)+\bm{a}_1(2)\bm{a}_1(2)^\T)G_2^l(2)\bm{a}_2^l+b_2^l(2)+\sum_{i,j\in[r]}(A^{i,j}+w^{i,j}_{2,l})(a^i_1(2)a_1^j(2)+c_1^{i,j}(2))\\
\bar{u}_2^l(1)  &=&{\bm{a}_2^l}^\T(C_1(2)+\bm{a}_1(2)\bm{a}_1(2)^\T)\bm{a}_2^l+b_2^l(1)+\sum_{i,j\in[r]}c_{2,l}^{i,j}(a^i_1(1)a_1^j(1)+s_1^{i,j}(1)).
\end{eqnarray*} 

There are various points we would like to draw out of this example:
\begin{enumerate}
\item even for simple systems like the one above, the required formulae are quite involved and incorporate uncertainty judgements in a subtle but appropriate way, unlike their na\"{i}ve plug-in analogues;
\item because of their symbolic integrity, the formulae to score different options and their associated message-passing operations can be hard-wired into a system just like a propagation algorithm can for a BN. These operations are usually simply substitutions, products and sums of known quantities and so fast to calculate;
\item because these methods are symbolic, in particular, the same formulae but with different inputs can be used for evaluating the scores associated with different policies. So computations can be parallellised;
\item the formulae only need the delivery of a small number of values from panels, making the necessary calculations feasible and quick once the overarching structure and utility class has been agreed.
\end{enumerate}

\subsection{UK Food Security}
\label{sec:food}
In the previous section we illustrated how the algorithms needed for an IDSS can be calculated using the structural information defining the underlying process. We now finish our discussion by outlining our ongoing work to produce such an IDSS to support UK local government officers to assess the impact that various government funding reductions might have on the food poverty of citizens for whom they have a responsibility of care. 

The first part of this task was to elicit from the decision-makers the purpose of the IDSS, the range of decision that were open to users, what the attributes of interest were, and how these were currently measured.   In the context of Warwickshire County Council, the decision space needed to be able to express how and what services to provide in order to ameliorate the negative effects of government cuts. Through a preliminary decision conference the clients identified that the bad effects of food poverty potentially manifested itself through its impacts on educational attainment, health and social incohesion expressed as the potential for inciting food riots. In constructing a utility function based on these attributes, it appeared appropriate to assume value independence \citep{Keeney1993}. Let $Y_1$ denote measures of educational attainment, $Y_2$ denote measures of health, $Y_3$ denote measures of social unrest and $Y_4$ denote cost.  An appropriate family of utility functions was then found to have the form:

\begin{equation}
U(y)=\sum_{i\in[3]} k_i(1-exp({-c_i y_i}))+k_4(a+by_4),
\label{Eq:WarwickshireUtility}
\end{equation}
where $y=(y_1, y_2, y_3, y_4)$ and whose parameters $(a, b, c_1, c_2, c_3)$ were then elicited.   

From the elicitation we could see that any structural model had to support the prediction of $(Y_1, Y_2, Y_3,Y_4)$ as these evolved into the future as a response to each potential policy choice. Figure \ref{FoodDBNTikz} gives a schematic of both the main features of the process we capture in this framework and the constitution of panels over the system. The panels constituted for such an IDSS will often be chosen to mirror the panels that are already constituted for similar purposes, e.g. in the UK, the Office for Budget Responsibility, HM Treasury and The Confederation of British Industry all produce economic forecasts (Economy). Data sources on the different components are extensive and can be incorporated into the model by appropriate panels, although the existence of some gaps in the process makes the elicitation of some probabilistic expert judgements essential.  Demography and Socio-economic status (SES) distributions are available from the Office for National Statistics (ONS); costs of housing, energy and general cost of living (CoL) via the consumer prices index (CPI); food trade (imports and exports) and farming yields from DEFRA; supply chain disruption and overall food availability from DEFRA, the food standards agency and the food and drink federation; access to credit from the Bank of England; household disposable income from ONS and Food costs via the cost of a typical basket of food, as used in the UK consumer prices index \citep{ONS2013}. We need to build separate probabilistic models of supply chains for various diverse categories of food which will then combine together to provide forecasts of the costs over time of typical baskets of food for different categories of impoverished households. At the time of writing, we have now elicited and built prototype versions of such dynamic probabilistic models for about 40\% of these categories of food \citep{Barons2014}. Once these are in place, the effect of different scenarios and strategic policies, to be elicited, can then be examined via their composite utilities.  In reality, some of this information is very rich. For example, the insurance industry has extensive data on types and locations of all the food in transit, data which is fast-moving in time, so the datasets informing these systems can be huge. But also note that typically we need only short summary statistics for the purposes of the IDSS. The distributivity of the system we build also means that different types of processes can be audited separately using analogies of the tower rule recurrences described in the last section. These can, if necessary, be adjusted through discussions with relevant domain experts.  In this way, the system becomes a vehicle through which understanding and appreciation of the input of the diverse components of the system by the user can be realised.

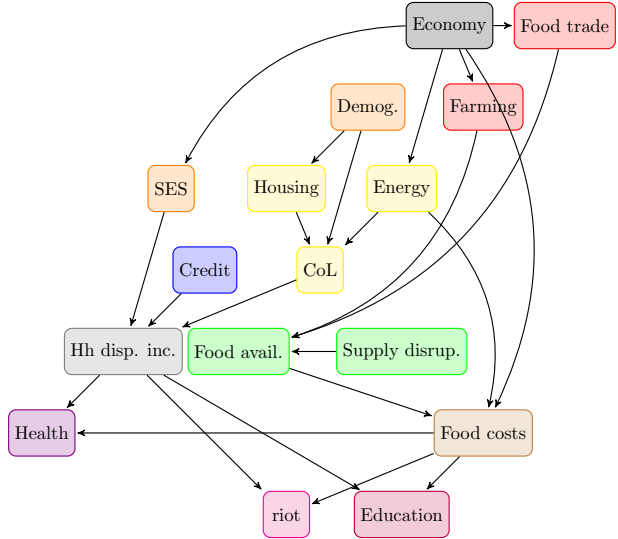
\begin{figure*}
\begin{center}
\scalebox{0.7}{
\begin{tikzpicture}[->,>=stealth',shorten >=1pt,auto,node distance=2.19cm, semithick]
\tikzstyle{every state}=[fill=white,draw,text=black, shape=rectangle, rounded corners]
\node[state](A)  [draw=black!100, fill=black!20,  text=black]              			{Economy};
\node[state](B)  [below left of = A, 	draw=orange!100, 	fill=orange!20] 						{Demog.};
\node[state](C)  [right of = A,				draw=red!100,  		fill=red!20, text=black,]{Food trade};
\node[state](D)  [below left of=B, 		draw=yellow!100, 	fill=yellow!20, text=black ]{Housing};
\node[state](E)  [right of=D, 				draw=yellow!100, 	fill=yellow!20]						{Energy};
\node[state](F)  [right of = B, 			draw=red!100, 		fill=red!20]						{Farming};
\node[state](G)  [left of= D, 				draw=orange!100, 	fill=orange!20]						{SES};
\node[state](H)  [below left of= D, 	draw=blue!100, 		fill=blue!20]						{Credit};
\node[state](I)  [right of=H, 				draw=yellow!100,	fill=yellow!20]						{CoL};
\node[state](J)  [below left of=I, 		draw=green!100,  	fill=green!20]						{Food avail.};
\node[state](K)  [below right of=I, 	draw=green!100, 	fill=green!20]						{Supply disrup.};
\node[state](L)  [below left of=H, 		draw=gray!100,  	fill=gray!20]						{Hh disp. inc.};
\node[state](M)  [below right of=K, 	draw=brown!100,  	fill=brown!20]						{Food costs};
\node[state](N)  [below left of=L, 		draw=violet!100,  fill=violet!20]						{Health};
\node[state](O)  [below left of=M, 		draw=purple!100,  fill=purple!20]						{Education};
\node[state](P)  [left of=O, 					draw=magenta!100,  fill=magenta!20]						{riot};
\path (A) edge [bend right](G)
					edge (E)
					edge [bend left](M)
					edge (C)
					edge (F)
      (B) edge (D)
          edge (I)
      (C) edge [bend left](J)
      (D) edge (I)         
      (E) edge (I)
					edge [bend left](M)
      (F) edge [bend left](J)
      (G) edge (L)
      (H) edge (L)
			(I) edge (L)
			(J)	edge (M)
      (K) edge (J)
      (L) edge (N)
					edge (O)
					edge (P)
			(M) edge (N)
					edge (O)
					edge (P);
\end{tikzpicture}
}
\end{center}
\caption{A plausible schematic of information flows for the modules of a UK food security IDSS. KEY:  Economy: UK economic forecasts; Demog.:Demography;  Farming: food production; SES: Socio-economic status; Credit: access to credit;  CoL:cost of living; Food Avail: Food availability; Supply disrup:  food supply disruption;  disp. inc.: household disposable income. \label{FoodDBNTikz} }
\end{figure*}

\subsubsection{Some illustrative detail: GCSEs and free school meals}

To illustrate how the process works, we conclude by working through in slightly more detail a simplified version of the explicit calculations used to make for the examination of inputs concerning just one variable of interest, educational attainment as measured through Impact Indicator 8 (II8), the percentage of pupils gaining 5 or more GCSE qualification, including maths and English, passed at grade C and above.   Children from low-income households are entitled to free school meals (FSM). A smaller proportion of pupils eligible for FSM achieves this benchmark compared to their peers who are not eligible.

The panel with expertise in the expected distribution of household incomes will use an appropriate probabilistic model, e.g. an MDM, to provide appropriate forecasts of numbers entitled to FSM together with measures of uncertainty, reflecting the quality of the experimental evidence on the relationship.  This enables the next panel to predict the expectation and variance of II8 under differing policy implementations. In this particular case,  since this is a terminal panel in the IDSS, its output is an attribute of the utility function, i.e. a reward. These assessments are donated to the panel calculating the effect on the future predictions of the utility function. II8 is calculated using the simple formula:
\begin{equation}
Y_1(t)=\frac{100}{n} \left(\sum_{j\in[n]} Y_{j}(t)\right),
\end{equation}
 where $Y_{j}(t)\in [0,1]$ is an indicator variable on whether or not the individual achieves the target at time $t$, the next time point. Assume an MDM holds given by: $Y_{j}(t)=A(t)-B_{j}(t) Y^{\ast}(t)+\epsilon(t),$ 
where $Y^{\ast}(t)$ is a binary indicator of eligibility for free school meals and the parameters $A(t), B(t)$ have distributions derived from the previous time point.  Then the polynomial recurrences $\mathbb{E}(Y_1(t))$ and $\mathbb{V}(Y_1(t))$ can be derived as given in Section \ref{sec:frameworks}.

Within our actual system the following calculations need to be made numerically. But as a simple illustration in order to illustrate some of functionality of the IDSS in a simple way, suppose that the reward is Gaussian. Then it is easy to check that the corresponding expected utility function given by
\begin{equation}
\bar{U}(d)=1-exp\left(-c\{\mathbb{E}(Y_1(t)\ |\ d)-\frac{1}{2}c \mathbb{V}(Y_1(t)\ |\ d)\}\right).
\label{Eq:MargUtil}
\end{equation}
This is clearly maximised when $\mathbb{E}(Y_1(t)\ |\ d)-\frac{1}{2}c \mathbb{V}(Y_1(t)\ |\ d)$ is maximised over $d \in \mathbb{D}$. So we can see here a clear trade-off between the predicted \emph{effect} of a decision against its associated uncertainty. In particular, it is apparent that we have automatically and explicitly accommodated into the scores the fact that effects must be militated by an associated uncertainty.
Note that when the reward is non-Gaussian, the expression (\ref{Eq:MargUtil}) will be a function of the cumulant generating function evaluated at different points. The education panel can, of course, simply calculate these outputs from its probabilistic model to produce the necessary summaries. 

\subsubsection{Using the IDSS to feed back information}

\begin{figure}[h!]
\centering
\includegraphics[width=0.49\linewidth]{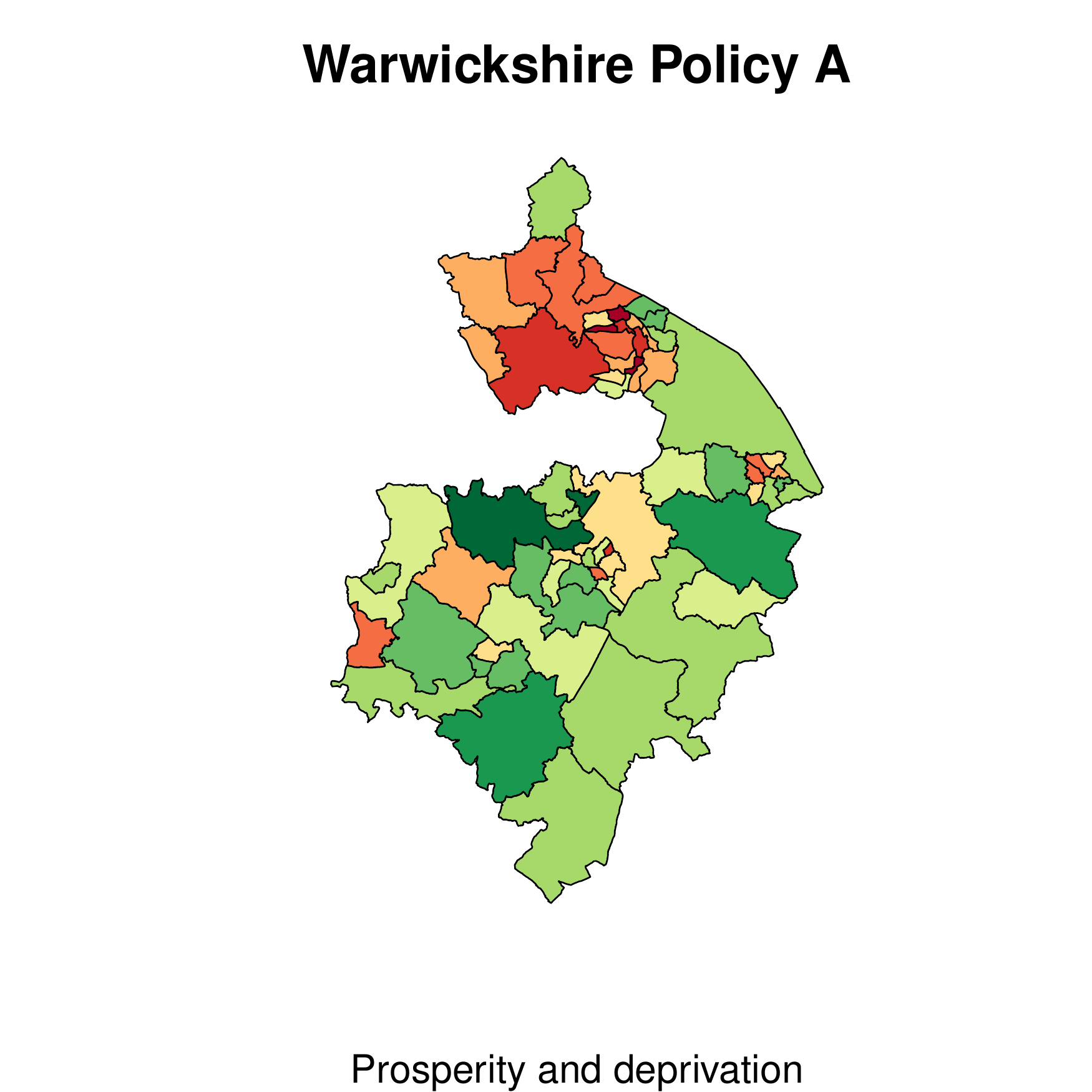}~~\includegraphics[width=0.49\linewidth]{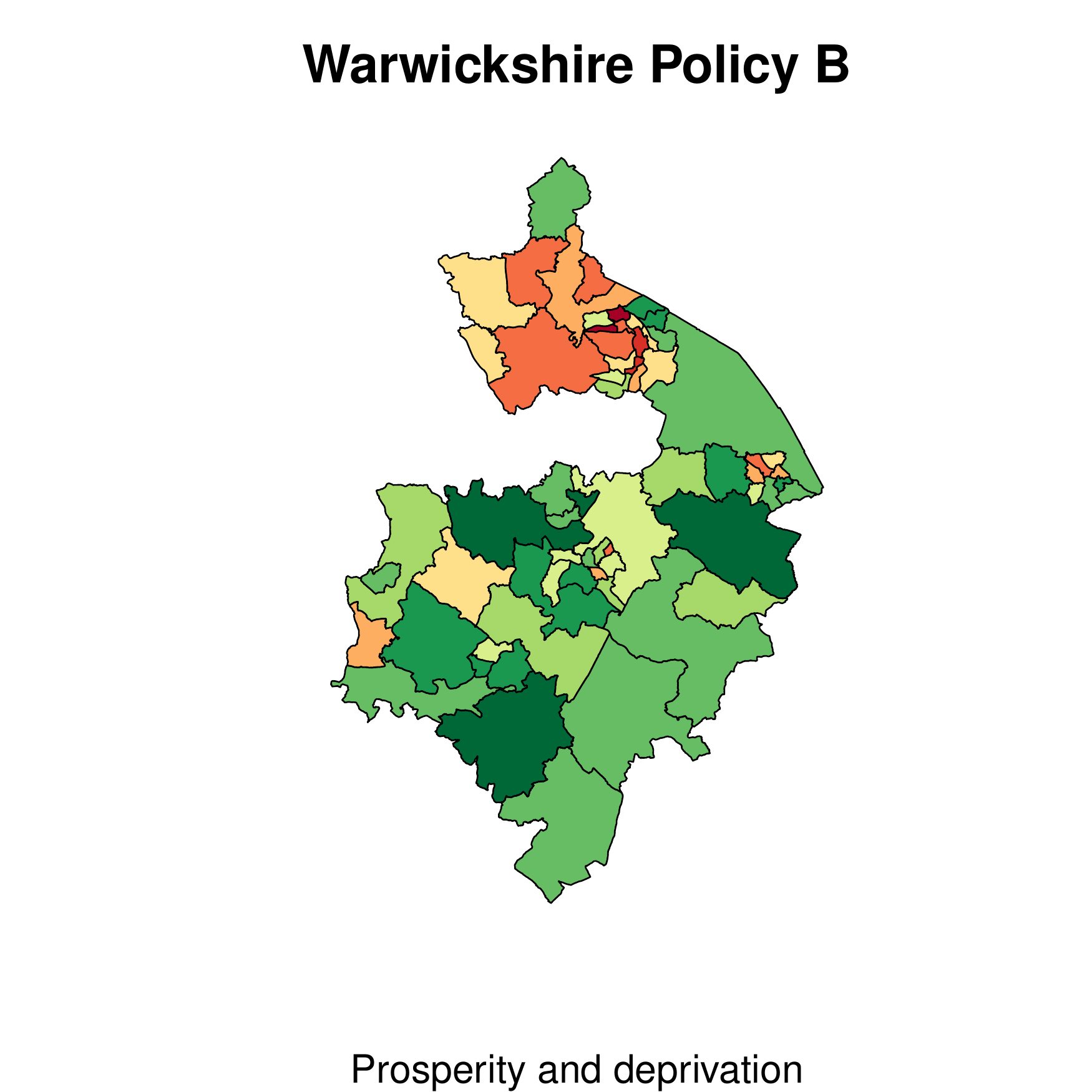}

\caption{\footnotesize UK food security: illustration of the use of regional maps for decision support. Here an indicator of prosperity shows deprived areas in red. A clustering of deprivation (left map) is a risk factor for social unrest, such as food riots.  Therefore, policies which specifically reduce and fragment large areas of poverty (right map) are to be preferred. }
\label{Fig:II8}
\end{figure}

Presently, the decision makers are presented with current and past data  concerning relevant issues displayed in graphs and maps within written reports. But these contain no annotated predictions of impact on vulnerable communities of \emph{future} events or the impact of central government changes in regulation.  Nor are evaluations given of the likely effectiveness of different implementations of changes, i.e. policy options. We are currently transferring the output of our  analyses to  the same types of graphs and maps, but now concerning \emph{predictions} of the impact of future controls.
In Figure \ref{Fig:II8} we illustrate how their familiar demographic regional map can be used to display `hot-spots' for food riot potential that might occur under certain policies. Here the decision centre can immediately see how a decision is predicted to have a dramatic deleterious effect in the more deprived areas of the region, and because of this, increase the risk of rioting. We have shown that, despite the complexities of the food security system, we have been able to establish the  CK-class for this problem, to identify how to make a sound and distributed IDSS using expert judgement based on suitable models,  data and elicited domain knowledge, to accommodate rigorously the uncertainty in the information and to provide the SB with an appropriate visualisation of the IDSS outputs to  make them usable.

\section{Discussion}

In this paper we have been able to demonstrate that a formal and feasible IDSS can be designed to enhance a decision centre's decision-making capabilities.  We noted that systems with an underlying clear causal directionality are especially amenable to this type of support.  What we usually need is that  the functionality of the support to be precisely  defined so that a suitable overarching framework can be agreed by all parties - several illustrations of these are given above - and that the IDSS has available sufficient numbers of expert judgements to inform it.

Then it will be possible to continually update the system with the best of the evidence and do this in a distributed way that ensures it remains coherent, sound and feasible. Furthermore, the calculations of scores of policy options will often depend on the expectations of a few functions donated by autonomous panels, communicated to the other panels in simple ways so that uncertainties are appropriately incorporated into the analysis.

Of course, as for any class of statistical model, the validity of the assumptions underpinning its evaluation needs to be checked against both domain knowledge and empirical evidence as this becomes available. As we stated earlier, under the conditions we derive above diagnostic checks of the various component models can be devolved to the responsible panel. But what if the structural assumptions within the CK class appear from data to be violated even though the individual components themselves perform well?

This is a challenging issue and obviously depends critically on both the domain on which the IDSS applies and the nature of the information sets. In crisis management systems like RODOS, as an incident unfolds such structural reappraisal would be very difficult to enact in real time. However, within integrating systems for policy decision support, like the food security system discussed above, such data-driven appraisal and revision may be possible. The obvious way to proceed would be to set the predicted consequences of policy choices made using the system against their predictions. A natural framework for applying these techniques would be the prequential approach \citep{Dawidpre} where model predictions of immediate consequences of enacted policies are compared with what was actually observed to happen - for example in terms of comparisons of observed and predicted marginal utility scores - a methodology we plan to fully develop in the future.

Our focus here has been on networking probabilistic models. However this
focus was determined mainly because probabilistic decision support tools are
the most widely used of those that accommodate uncertainty in some way. We
accept there are other modelling tools - for example those based on linear
Bayes or belief function inferential schemes - which also have their own
formal structures and using similar technologies can also ensure coherent
and speedy integrating decision support. 

We have noted above that many of the computational forecasting formulae can
be expressed algebraically where the systems of equations can be deduced
from the elicited overarching system. So, in particular, computer algebra can be used to code up  these equations which
can then be examined for their algebraic properties. As well as guiding the design of the network architecture of an IDSS, issues like the
robustness of the systems to variation in inputs can be formally and
systematically analysed using these tools. We have already made some
tentative steps in this development \citep[see][]{Leonelli2015b}.

Of course, in some situations the likelihood separation we need to
ensure the enduring formal distributivity of a formal IDSS will break down.
Then we will need to fall back on approximate inferential methods that
preserve distributivity. However preliminary results suggest that sensible
approximate methods - whose form is similar to variational Bayes methods -
can still work effectively provided that statistical diagnostics are
employed to check that the family of models described by the overarching
posited structure remains plausible in the light of the information
available. These results will be reported in due course.

We believe that, with the increasing demand for complex evidence based
models the demand for integrating decision support systems can only
increase. If such systems do not encode uncertainty intelligently then as
statisticians we know that these will be liable to grossly mislead. But as
we demonstrate here it is straightforward to appropriately encode
uncertainties into such a framework and in this way to properly appraise the available options.

\section*{Acknowledgements}
JQS \& MJB were funded by EPSRC grant EP/K039628/1, `Coherent inference over a network of probabilistic systems for decision support with applications to food security' and ML was funded by EPSRC grant EP/K007580/1, `Management of Nuclear Risk Issues: Environmental, Financial and Safety', within the NRFES project.

\bibliographystyle{plainnat} 

\bibliography{IDSS} 

\section{Appendix}
\subsection{Proof of Theorem \ref{theo:gold}}
\label{gold}
Fix a $d\in \mathcal{D}$ and for simplicity suppress this dependence and the time index $t$. We need show that under the conditions of the theorem at any time $t$ and under any policy the SB will continue to hold panel independent beliefs, i.e. that, for $i\in[m]$, 
\begin{equation}
\bm{\theta }_{i}\independent \bm{\theta }_{i^{-}}\;|\;I_{+},
\label{indep components}
\end{equation}
and that when assessing $\bm{\theta }_{i}$, she will only use the information $G_{i}$ would use if acting autonomously in assessing the information she needs to deliver, i.e. that
\begin{equation}
\bm{\theta }_{i}\independent I_{+}\;|\;I_{0},I_{ii}.  \label{indep updating}
\end{equation}
This is then sufficient for soundness and distributivity. Because even if all panellists could share each other's information then, given all panel beliefs, they would come to the same assessment about the joint distribution of the relevant parameters: that all panel subvectors are mutually independent of each other and that these margins will simply be the margins of the associated panel were they making decisions autonomously. 

The proof simply uses the semi-graphoid axioms of conditional independence as stated in \citet{Smith1989}:\label{Def:SemiGraphoid}\\
\emph{Symmetry:} For any three vectors of measurements, $X,Y,Z:$ 
\[X \independent Y\ | \ Z \Leftrightarrow Y\independent X \ |\ Z
\]
\emph{Perfect composition:} For any four disjoint vectors of measurements, $W,X,Y,Z:$  
\[
X \independent (Y,Z)\ |\ W \Leftrightarrow  X \independent Y \ | \ (W,Z)\  \&\  X \independent Z\ | \ W
\]

 We start by proving the panel independence condition in (\ref{indep components}). Note that from common separability in equation (\ref{commonly sep}) it follows that
\begin{align}
\bm{\theta}_i &\independent \bm{\theta}_{i^{-}} \;|\; I_{0}.\nonumber
\shortintertext{which combined with the separately informed condition in equation (\ref{sep inform}) through perfect decomposition and using the symmetric property of semi-graphoids axioms gives}
\bm{\theta}_{i^{-}} &\independent I_{ii},\bm{\theta}_i\;|\; I_{0}.\nonumber 
\shortintertext{ Using again perfect decomposition and symmetry, it follows that} 
\bm{\theta}_i&\independent \bm{\theta}_{i^{-}}\;|\; I_{ii},I_{0}. \label{proof:sound1}\\
\shortintertext{Now the cutting condition in equation (\ref{cutting}) together with equation (\ref{proof:sound1}) implies by perfect decomposition that}
\bm{\theta}_i&\independent I_*,\bm{\theta}_{i^{-}}\;|\; I_{0},I_{ii}.\label{proof:sd}\\
\shortintertext{Then again by perfect decomposition we have that}
\bm{\theta}_i&\independent \bm{\theta}_{i^{-}}\;|\; I_{0},I_{ii}, I_*.\nonumber\\
\shortintertext{Since $I_{ii}$ is a function of $I_*$ the above expression is equivalent to} 
\bm{\theta}_i&\independent \bm{\theta}_{i^{-}}\;|\; I_{0}, I_*. \label{proof:sound2} \\
\shortintertext{The delegatable condition in equation (\ref{delegatable}) can be written as}
I_+&\independent \bm{\theta}_i,\bm{\theta}_{i^{-}}\;|\; I_{0},I_{*},\nonumber
\shortintertext{so, using perfect decomposition, it follows that}
I_+&\independent \bm{\theta}_{i}\;|\; I_{0},I_*,\bm{\theta}_{i^{-}}. \label{proof:sound3}
\shortintertext{Combining via perfect decomposition equations (\ref{proof:sound2}) and (\ref{proof:sound3}) and using the symmetry property we have that}
\bm{\theta}_{i}&\independent I_+,\bm{\theta}_{i^{-}}\;|\; I_{0},I_*. \nonumber\\
\shortintertext{So using again perfect decomposition it follows that}
\bm{\theta}_{i}&\independent \bm{\theta}_{i^{-}}\;|\; I_{0},I_*,I_+,\nonumber
\end{align}
which, since $I_*$ and $I_{0}$ are functions of $I_+$, can be written as equation (\ref{indep components}). This shows that panel independence directly follows from the four conditions of Definition \ref{def:cond}.

To show that equation (\ref{indep updating}) holds, note that another implication of delegatability in equation (\ref{delegatable}) by perfect decomposition is that
 \begin{align}
\bm{\theta}_i&\independent I_{+}\;|\; I_{0},I_{ii},I_*, \label{proof:sound41}\\
\shortintertext{where again we used the fact that $I_{ii}$ is a function of $I_*$. Now noting that by perfect decomposition equation (\ref{proof:sd}) implies that}
\bm{\theta}_i&\independent I_* \;|\; I_{0},I_{ii}, \label{proof:sound5}\\
\shortintertext{it follows from equations (\ref{proof:sound41}) and (\ref{proof:sound5}) by perfect decomposition that}
\bm{\theta}_i & \independent I_+,I_*\;|\; I_{0},I_{ii},\nonumber
\end{align}
which, since $I_*$ is a function of $I_+$ is equivalent to equation (\ref{indep updating}).

\subsection{Proof of Theorem \ref{theo:seplik}}
\label{proof:seplik}
Fix a policy $d\in\mathcal{D}$ and suppress for ease of notation this dependence. Under the initial hypotheses, by Theorem \ref{theo:gold}
\begin{equation*}
\independent_{i\in[m]}\bm{\theta }_{i}\;|\;I_{0}^{0},I_{\ast }^{0},
\end{equation*}
implying that the prior joint density can be written in a product form 
\begin{align*}
\pi (\bm{\theta })&=\prod_{i\in[m]}\pi _{i}(\bm{\theta }_{i}).\\
\shortintertext{It follows that under this admissibility protocol} 
\pi (\bm{\theta }\;|\;\bm{x}^{t})&=\prod_{i\in[m]}\pi_{i}(\bm{\theta }_{i},\bm{t}_{i}(\bm{x}^{t})),\\
\shortintertext{where from the form of the likelihood above} 
\pi _{i}(\bm{\theta }_{i},\bm{t}_{i}(\bm{x}^{t}))&\propto l_{i}(\bm{\theta }_{i}\;|\;\bm{t}_{i}(\bm{x}^{t}))\pi _{i}(\bm{\theta }_{i}).
\end{align*}
By hypothesis $\pi _{i}(\bm{\theta }_{i},\bm{t}_{i}(\bm{x}^{t}))$ will be delivered by $G_{i}$ and adopted by the SB. So the IDSS is sound. In particular we can deduce, through the definition of panel separability and the results of Theorem \ref{theo:gold}, that 
\begin{equation*}
\independent_{i\in[m]}\bm{\theta }_{i}\;|\;I_{0}^{t},I_{+}^{t}\;\;\Longleftrightarrow\; \;\independent_{i\in[m]}\bm{\theta }_{i}\;|\;I_{0}^{t},I_{\ast }^{t}.
\end{equation*}
So we have separability a posteriori. Note that in the notation above 
\begin{equation*}
I_{+}^{t} =\left\{I_{0}^{t},I_{\ast}^{t}\right\} =\left\{ I_{0}^{0},I_{\ast }^{0},\bm{x}^{t}\right\},
\end{equation*}
and, for $i\in[m]$,
\begin{equation*}
\left\{ I_{0}^{t},I_{ii}^{t}\right\} =\left\{ I_{0}^{t},I_{ii}^{0},\bm{t}_{i}(\bm{x}^{t})\right\}.
\end{equation*}
Since by definition $\bm{t}_{i}(\bm{x}^{t})$ is known to $G_{i}$, $i\in[m]$,  the system is also delegatable. Finally note that if 
\begin{equation*}
l(\bm{\theta }\;|\;\bm{x}^{t})\neq\prod_{i\in[m]}l_{i}(\bm{\theta }_{i}\;|\;\bm{t}_{i}(\bm{x}^{t}))
\end{equation*}
on a set $A$ of non zero prior measure, then the conditional density $\pi_{A}(\bm{\theta }\;|\;\bm{x}^{t})$ on $A$ will have the property that for all $\bm{\theta} \in A$ 
\begin{equation*}
\pi _{A}(\bm{\theta }\;|\;\bm{x}^{t})\neq\prod_{i\in[m]}\pi _{A,i}(\bm{\theta }_{i},\bm{t}_{i}(\bm{x}^{t})),
\end{equation*}
where $\pi_{A,i}$ denotes the density delivered by panel $G_i$ for the parameters it oversees in $A$. This means that panel parameters are a posteriori dependent and so in particular the density determined by the margins is not sound.

\end{document}